\newtheorem{lemma}{Lemma}
\newtheorem{theorem}{Theorem}
\newtheorem{corollary}{Corollary}
\newcommand{\MAXCUT}{\textsc{Max-Cut}}
\newcommand{\QMAXCUT}{\textsc{Quantum Max-Cut}}
\newcommand{\PROD}{\textsc{Product-QMC}}
\newcommand{\QMA}{\textup{QMA}}
\DeclareMathOperator{\poly}{poly}
\DeclareMathOperator{\OPT}{OPT}
\DeclareMathOperator{\OPTprod}{OPT_{prod}}
\newcommand{\R}{\mathbb{R}}
\newcommand{\C}{\mathbb{C}}
\newcommand{\T}{\textsf{T}}
\renewcommand{\S}{\hat{S}}
\renewcommand{\vec}[1]{\mathbf{#1}} 
\newcommand{\vvec}[1]{\overrightarrow{\mathbf{#1}}}
\newtheorem*{rep@theorem}{\rep@title}
\newcommand{\newreptheorem}[2]{%
\newenvironment{rep#1}[1]{%
 \def\rep@title{#2 \ref{##1}}%
 \begin{rep@theorem}}%
 {\end{rep@theorem}}}
\title{Quantum Max-Cut is NP-hard to approximate}
\author{Stephen Piddock \\ \textit{Department of Computer Science, Royal Holloway, London, UK}}
\begin{document}
\maketitle

\begin{abstract}
    We unconditionally prove that it is NP-hard to compute a constant multiplicative approximation to the \QMAXCUT{} problem on an unweighted graph of constant bounded degree. The proof works in two stages: first a generic reduction to computing the optimal value of a quantum problem, from the optimal value over product states. Then an approximation preserving reduction from \MAXCUT{} to \PROD{}, the product state version of \QMAXCUT.
   
    More precisely, in the second part, we construct a PTAS reduction from \MAXCUT$_k$ (the rank-$k$ constrained version of \MAXCUT) to \MAXCUT$_{k+1}$, where \MAXCUT{} and \PROD{} coincide with \MAXCUT$_1$ and \MAXCUT$_3$ respectively. We thus prove that \MAXCUT$_k$ is APX-complete for all constant k.

\end{abstract}

%No macros abstract:
%We unconditionally prove that it is NP-hard to compute a constant multiplicative approximation to the QUANTUM MAX-CUT problem on an unweighted graph of constant bounded degree. The proof works in two stages: first we demonstrate a generic reduction to computing the optimal value of a quantum problem, from the optimal value over product states. Then we prove an approximation preserving reduction from MAX-CUT to PRODUCT-QMC the product state version of QUANTUM MAX-CUT.
%More precisely, in the second part, we construct a PTAS reduction from MAX-CUT_k (the rank-k constrained version of MAX-CUT) to MAX-CUT_{k+1}, where MAX-CUT and PRODUCT-QMC coincide with MAX-CUT_1 and MAX-CUT_3 respectively. We thus prove that Max-Cut_k is APX-complete for all constant k.

\section{Introduction}

\QMAXCUT{} is a quantum generalisation of the classical \MAXCUT{} problem . In \MAXCUT, the aim is to assign values $s_i \in\{\pm 1\}$ for each vertex in a graph $G$ in order to maximise the number of edges $ij$ with different signs $s_i \neq s_j$.  
More precisely, given a weighted graph $G=(V,E,w)$, with $|V|=n$ and $|E|=m$
\[\MAXCUT(G)=\max_{s \in \{\pm 1\}^n }\sum_{ij\in E} w_{ij}\frac{1}{2}(1-s_i s_j).\] 
\QMAXCUT{} is similar except that there is qubit at each vertex of the graph and the problem is to compute $\OPT(H)$, the maximum energy $\Tr(H\rho)$ over $n$-qubit quantum states $\rho$ with respect to the Hamiltonian $H$:

\[H=\sum_{ij \in E}w_{ij}\frac{1}{4}(I-X_iX_j -Y_iY_j - Z_i Z_j).\] 
where $X,Y,Z$ are the Pauli matrices and the subscript $i$ denotes the action of an operator on the $i$th qubit. 
Here the weight $w_{ij}$ is interpreted as the strength of the interaction $I-X_iX_j -Y_iY_j - Z_i Z_j$.

%Heisenberg interaction
Computing $\OPT(H)$ exactly is equivalent to calculating the maximum eigenvalue of $H$, or the ground state energy of $-H$, an antiferromagnetic Heisenberg Hamiltonian. 
The ground state energy is the physically relevant quantity, and the ``Local Hamiltonian problem'' is the task of computing the ground state energy to inverse polynomial accuracy. Here we focus on the maximisation formulation as it mirrors the classical \MAXCUT{} problem, and is better suited to approximation algorithms.

As the only two local qubit interaction with $SU(2)$ symmetry (i.e. it is invariant under conjugation by $U\otimes U$ for any $U \in SU(2)$), the Heisenberg interaction is in a sense one of the simplest and most natural two-qubit interactions. 
It is closely related to the Fermi-Hubbard model at half-filling, a simple but important model of quantum magnetism.

Due to its symmetry, the Heisenberg interaction plays a central part in the proof of the complexity classification of local Hamiltonian problems \cite{CubittMontanaro16}, and in the construction of universal qubit Hamiltonians \cite{Cubitt18}.
The decision variant of \QMAXCUT{} is \QMA-complete \cite{PM17}, where the aim is to decide if $\OPT(H)$ is above $a$ or below $b$ for thresholds $a,b\in \R$ with $a-b\ge 1/\poly(n)$.
A major drawback of this result is that the weights $w_{ij}$ vary throughout the graph, becoming as large as $\poly(n)$. This distances the model from the most physically relevant setting where the weights are all equal to 1.

Since it is hard to compute \QMAXCUT{} to additive error, attention has turned to approximation algorithms \cite{GP19,parekh2022optimal,lee2022optimizing,king2022improved,lee2024improved}. 
These algorithms do not find the true optimal solution, but find states $\rho$ which are guaranteed to achieve an energy $\Tr(H\rho)$ that comes within a constant multiplicative factor $\alpha$ of the optimal value $\QMAXCUT(G)$.

The approximation algorithm of \cite{GP19} is based on the famous Goemans-Williamson approximation algorithm \cite{goemans1995improved} for \MAXCUT, and its generalisation \cite{briet2010positive}. 
For \MAXCUT, the Goemans-Williamson algorithm is known to be optimal in the sense that, assuming the unique games conjecture, it is NP-hard to get a better approximation ratio for \MAXCUT{} \cite{khot2007optimal}.
Even without the unique games conjecture, \MAXCUT{} is known to be APX-complete \cite{papadimitriou1988optimization}, meaning it is as hard as any problem in APX, the class of classical optimisation problems with an efficiently achievable constant approximation ratio. And H{\aa}stad proved that it is NP-hard to approximate \MAXCUT{} for any approximation ratio better than $16/17$ \cite{haastad2001some}. 

For \QMAXCUT, the situation is very different. A number of extensions and improvements have been developed since the algorithm of \cite{GP19}, with the best achievable approximation ratio increasing over time \cite{parekh2022optimal,lee2022optimizing,king2022improved,lee2024improved}, and the simple SDP based method of \cite{GP19} proving not to be optimal.

On the hardness side, we currently have no way to extend the QMA-hardness result of \cite{PM17} to the regime of constant approximation ratio algorithms, since this would require proving a quantum version of the PCP theorem, which appears to be beyond the reach of current tools.
Although \QMAXCUT{} is introduced as a generalisation of \MAXCUT, we do not have any way to directly embed \MAXCUT{} instances into \QMAXCUT{} instances, and so we do not inherit any of the NP-hardness results from \MAXCUT. 
Recently, there has been an attempt \cite{hwang2022unique} at upper bounding the approximation ratio for \QMAXCUT{} assuming the unique games conjecture, analagous to \cite{khot2007optimal}.

Is \QMAXCUT{} hard when the interaction strengths are $O(1)$, or even uniform? Is it hard to approximate up to a constant approximation ratio? Can we prove a more direct reduction from \MAXCUT{} to \QMAXCUT ?
We remark that the first and third question here are the essence of open questions 3.2 and 3.3 from Gharibian's recent survey paper on quantum versions of NP \cite{Gharibian_2023}.

We resolve these questions in our main result:

\begin{theorem}
    \label{thm:hardtoapproximate}
    There is a constant $\alpha<1$ such that it is NP-hard to compute a value that approximates the optimal  \QMAXCUT{} value within approximation ratio $\alpha$.
    This holds even when the graph $G$ is unweighted ($w_{ij}=1$ for all $ij\in E$) and each vertex has bounded (constant) degree.
\end{theorem}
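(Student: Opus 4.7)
I would establish Theorem~\ref{thm:hardtoapproximate} through the chain of approximation-preserving reductions
\[\MAXCUT = \MAXCUT_1 \;\to\; \MAXCUT_2 \;\to\; \MAXCUT_3 = \PROD \;\to\; \QMAXCUT,\]
starting from the APX-hardness of bounded-degree \MAXCUT{} \cite{papadimitriou1988optimization,haastad2001some}. The identification $\MAXCUT_3 = \PROD$ is just the Bloch-sphere parametrisation of single-qubit pure states: on a product state $\bigotimes_i |\psi_i\rangle$ with Bloch vectors $v_i \in S^{2}\subset \mathbb{R}^{3}$, the energy of $H$ equals $\tfrac12\sum_{ij\in E}w_{ij}(1-\langle v_i,v_j\rangle)$, which (up to an overall scale) is exactly the $\MAXCUT_3$ objective.

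For the last arrow ($\PROD \to \QMAXCUT$) the instance map can simply be the identity on $G$. What is required is a dimension-free inequality $\OPT(H) \le \gamma\,\OPTprod(H)$ with an absolute constant $\gamma>1$, which is precisely what the analysis of the Gharibian--Parekh-type product-state algorithms \cite{GP19,parekh2022optimal} provides. Combined with the trivial $\OPTprod(H)\le \OPT(H)$, an $\alpha$-multiplicative estimate of $\OPT(H)$ is automatically an $(\alpha/\gamma)$-estimate of $\OPTprod(H)$, so any APX-hardness threshold for \PROD{} strictly below $1/\gamma$ transfers directly to \QMAXCUT{}. Unweightedness and bounded degree are preserved because the underlying graph is unchanged.

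The substance is the reduction $\MAXCUT_k \to \MAXCUT_{k+1}$ for $k\in\{1,2\}$. Given $G$, I would build $G'$ by attaching to every vertex $v$ a small, constant-size ``rigidifier'' gadget $\Gamma_v$ whose only role is to pin down a preferred $k$-dimensional subspace $W\subset \mathbb{R}^{k+1}$ into which $v$'s vector is forced. A natural candidate for $\Gamma_v$ is a regular $(k+2)$-clique of auxiliary vertices (a simplex with antiferromagnetic edges): its unique optimum in $\mathbb{R}^{k+1}$, up to rotation, has the anchors at the corners of a regular simplex, and is quadratically stable, so that a configuration within $\varepsilon$ of optimum on $\Gamma_v$ is $O(\sqrt{\varepsilon})$-close to that standard configuration. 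The $v$-to-anchor edges then pin $v$ to a fixed affine hyperplane up to the same error; choosing the rigidifier weight to be a large enough absolute constant ensures that any gain the $(k+1)$-th dimension could buy on $v$'s $G$-edges is dominated by this local rigidity penalty. Projecting every vector onto $W$ turns a near-optimal $\MAXCUT_{k+1}$ solution on $G'$ into a near-optimal $\MAXCUT_k$ solution on $G$, and the whole construction stays unweighted and bounded-degree once each heavy edge is replaced by a constant-length path sub-gadget.

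The main obstacle I anticipate is calibrating the rigidifier so that this quadratic-stability constant is genuinely instance-independent: a $(1-\varepsilon)$-approximate $\MAXCUT_{k+1}$ solution on $G'$ must project to a $(1-c\varepsilon)$-approximate $\MAXCUT_k$ solution on $G$, with $c$ depending only on $k$ and the bounded degree. This requires a sharp local perturbation analysis of the simplex ground configuration, coupled with a telescoping argument expressing the global $\MAXCUT_{k+1}$--$\MAXCUT_k$ energy gap of $G'$ as a sum of local rigidifier deficits. Once this step is established with instance-independent constants, applying it twice yields APX-hardness of bounded-degree \PROD, and the identity reduction of the second paragraph then delivers Theorem~\ref{thm:hardtoapproximate}.
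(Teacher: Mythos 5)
Your overall architecture is the right one, and it matches the paper's: a chain $\MAXCUT \to \MAXCUT_2 \to \MAXCUT_3 = \PROD \to \QMAXCUT$ with approximation-preserving reductions. However, your proposed implementation of the last arrow is fundamentally broken, and the middle arrow as you describe it has a missing coordination mechanism.

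\textbf{The identity reduction $\PROD \to \QMAXCUT$ cannot work.} You correctly note that a dimension-free bound $\OPT(H) \le \gamma\,\OPTprod(H)$ exists, but the tight value is $\gamma = 2$ (witnessed already by a single singlet edge, where $\OPT = 1$ and $\OPTprod = 1/2$). Your argument then requires \PROD{} to be APX-hard at a threshold strictly below $1/\gamma = 1/2$. But this is impossible: assigning independent uniformly random unit vectors to the vertices achieves $m/2$ in expectation for $\MAXCUT_3$, so \PROD{} admits an efficient $1/2$-approximation. Moreover, PTAS reductions only degrade thresholds toward $1$, so starting from $\MAXCUT$ at $16/17$ you would inherit a \PROD{} threshold near $1$, nowhere near $1/2$. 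There is no hope of salvaging the identity map. The paper instead proves Theorem~\ref{thm:prodapprox}: given $H$ on $n$ qubits, build a genuinely different Hamiltonian $H'$ on $nT$ qubits by blowing each vertex up into a cloud of $T$ copies and placing the interaction between every pair of copies on adjacent clouds. One then proves
\[T^k\,\OPTprod(H) \;\le\; \OPT(H') \;\le\; (T+2)^k\,\OPTprod(H),\]
so the multiplicative gap is $(1+2/T)^k$, which can be driven to $1$ by taking $T$ a large constant. This transfers \emph{any} constant hardness threshold for \PROD{}, with negligible loss, to \QMAXCUT{}, while keeping the instance unweighted and bounded-degree. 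The proof of the upper bound goes through $SU(2)$ representation theory: the sum $\hat S_i^{(a)} = \sum_t \sigma^{(a)}_{(i,t)}$ decomposes into irreducibles, whence $\OPT(H') = \max_{\vec J}\OPT(H(\vec J))$, and Lieb's classical-limit inequality $\OPT(H(\vec J)) \le \OPTprod(\tilde H(\vec J + \vec 1))$ closes the gap. This is the key idea that replaces your identity map; without it the reduction does not exist.

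\textbf{The per-vertex rigidifiers need a global reference frame.} Attaching an independent simplex gadget $\Gamma_v$ to each vertex pins $v$'s vector relative to \emph{its own} anchors, but distinct gadgets have independent rotational freedom, so there is no shared subspace $W$ to project onto. The paper resolves this by introducing a single shared reference: in the simplified version, one extra vertex $0$ participates in a (heavily weighted) triangle with each $i$; in the final unweighted, bounded-degree version, this reference is replaced by a large $d$-regular bipartite expander whose antipodal ground state supplies the common axis $\vec z$, and each $i$ is linked to it through several unweighted triangles. Lemma~\ref{lem:bipartitegraph} controls the angular deviation of the expander's vertices from $\vec z$ using the spectral gap, and Lemma~\ref{lem:triangle} translates angular deviation into an energy penalty that is quadratic near the optimum. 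Your ``telescoping argument over local deficits'' could be made to work, but only once all the deficits are measured against the \emph{same} reference direction; you need an explicit mechanism (like the shared expander) to enforce that. As a minor further point, the paper uses triangles for all $k$ (fixed angle $2\pi/3$, not orthogonality) rather than $(k+2)$-cliques, and replaces heavy edges by many parallel unweighted triangles to distinct expander vertices rather than by path subdivision; path subdivision of a heavy edge does not preserve the constraint strength.

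So: the high-level chain is the same as the paper's, and your simplex-gadget intuition is in the right direction, but both the final arrow and the gadget coordination require genuinely different ideas than the ones you sketch, and the paper's Theorem~\ref{thm:prodapprox} (the blow-up construction with Lieb's inequality) is precisely the missing piece you did not anticipate.
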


We believe that this result is the first hardness of approximation result for a family of quantum local Hamiltonian problems that does not require any conjectures or include a known hard-to-approximate classical CSP in that family.

Theorem~\ref{thm:hardtoapproximate} is proven by reducing from \MAXCUT{} to \QMAXCUT{}, although the reduction is not completely direct. As an intermediate problem, we consider the product state variant of \QMAXCUT.

First we show that approximating $\QMAXCUT(G)=\OPT(H)$ is at least as hard as approximating $\PROD(G)=\OPTprod(H)$, the maximal value over product states:
\[\PROD(G)=\OPTprod(H)=\max_{\rho =\rho_1 \otimes \rho_2\otimes \dots \otimes \rho_n} \Tr(H\rho)\]
We do this by taking a graph $G$, and replacing each vertex $i$ with a cloud of vertices $T_i$ where $|T_i|=T$, with edges between all vertices in $T_i$ and $T_j$ iff $ij \in E$. 
The new graph $G'$ has the same rough structure as $G$, but now each vertex has high degree, and since each qubit cannot be highly entangled with its neighbours, product states are now close to optimal. 

In fact the same idea applies generically to any $k$-local Hamiltonian and we prove the following theorem:
\begin{theorem}
    \label{thm:prodapprox}
    Let $H=\sum_{\vec{i}\in I} h_{\vec{i}}$ be a $k$-local qubit Hamiltonian with $h_{\vec{i}}\geq 0$ for all $\vec{i} \in I$. Then one can construct $H'$ using the interactions $\{h_{\vec{i}}\}$ such that
    \[T^k\OPTprod(H) \le \OPT(H')\le (T+2)^k\OPTprod(H)\]
\end{theorem}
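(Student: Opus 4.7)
The plan is to follow the cloud blow-up construction suggested in the preamble: for each qubit $i$ of $H$, introduce a cloud $T_i$ of $T$ fresh qubits, and for each local term $h_{\vec{i}}$ on qubits $(i_1,\ldots,i_k)$, include in $H'$ a copy $h_{\vec i,\vec j}$ acting on the qubits of every tuple $\vec j=(j_1,\ldots,j_k)\in T_{i_1}\times\cdots\times T_{i_k}$, so that $H'=\sum_{\vec i}\sum_{\vec j} h_{\vec i,\vec j}$ contains exactly $T^k$ copies of each original interaction. The lower bound is immediate by a product-state lift: taking $\sigma=\sigma_1\otimes\cdots\otimes\sigma_n$ attaining $\OPTprod(H)$ and defining $\hat\sigma=\bigotimes_i\sigma_i^{\otimes T}$, each of the $T^k$ copies of $h_{\vec i}$ contributes $\Tr(h_{\vec i}\sigma_{i_1}\otimes\cdots\otimes\sigma_{i_k})$, summing to $T^k\OPTprod(H)$.

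For the upper bound, I would start from any $\rho$ attaining $\OPT(H')$. Since $H'$ is invariant under the group $S_T^n$ of within-cloud permutations, averaging $\rho$ over this group keeps the energy unchanged, so I may assume $\rho$ is symmetric within each cloud. Then for each interaction $\vec i$, the $k$-qubit reduced state of $\rho$ on any one-qubit-per-cloud tuple in $T_{i_1}\times\cdots\times T_{i_k}$ is independent of the choice; denote this common state $\tau_{\vec i}$, so that $\Tr(H'\rho)=T^k\sum_{\vec i}\Tr(h_{\vec i}\tau_{\vec i})$.

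The crux is to produce, from the symmetrized $\rho$, a single probability measure $\mu$ over global product states $\sigma=\sigma_1\otimes\cdots\otimes\sigma_n$ of $H$ satisfying, for every $\vec i$,
$$\Tr(h_{\vec i}\tau_{\vec i}) \le (1+2/T)^k\,\mathbb{E}_{\sigma\sim\mu}\Tr(h_{\vec i}\sigma_{i_1}\otimes\cdots\otimes\sigma_{i_k}).$$
Summing then gives $\Tr(H'\rho)\le T^k(1+2/T)^k\,\mathbb{E}_\mu\Tr(H\sigma)\le(T+2)^k\OPTprod(H)$. My plan for constructing $\mu$ is to apply, independently on each cloud, a ``de Finetti-type measurement'' on the symmetric $T$-qubit state whose output distribution over single-qubit states reproduces the cloud's single-qubit marginal in expectation. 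The hard part will be establishing the per-interaction bound with both the explicit constant $(1+2/T)^k$ and with a \emph{global} measure $\mu$ rather than a different measure per interaction: the naive approach of summing per-interaction optima loses the fact that $\OPTprod(H)$ may be strictly smaller than $\sum_{\vec i}\OPTprod(h_{\vec i})$. The specific constant $1+2/T$ per cloud suggests this should reduce to a direct operator inequality between $\sum_{\vec j}h_{\vec i,\vec j}$ and a mixture of product-state operators on $T$ qubits per cloud, rather than an appeal to a generic quantum de Finetti theorem with asymptotic constants.
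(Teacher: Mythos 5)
Your lower bound is identical to the paper's: lift the optimal product state for $H$ to $H'$ by copying each $\rho_i$ across its cloud.

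Your upper bound takes a genuinely different route. The paper never touches a de Finetti measurement: it rewrites $H'$ in terms of the collective operators $\hat{S}^{(a)}_i=\sum_t\sigma^{(a)}_{(i,t)}$, uses the $SU(2)$ decomposition of the $T$-fold tensor representation to block-diagonalize $H'$ into sectors $H(\vec{J})$, deduces $\OPT(H')=\max_{\vec{J}}\OPT(H(\vec{J}))$, applies Lieb's coherent-state inequality $\OPT(H(\vec{J}))\le\OPTprod(\tilde{H}(\vec{J}+\vec{1}))$ \emph{blockwise}, and finally uses $h_{\vec{i}}\ge 0$ to majorize the right-hand side at $\vec{J}^*=(T/2,\dots,T/2)$, giving $(T+2)^k$ on the nose. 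Your symmetrize-then-round plan is essentially the Brandao--Harrow route, which the paper explicitly remarks would give only a ``variant'' with worse constants.

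The step you flag as ``the hard part'' is not merely unproven; as stated it does not hold, and neither of its two natural instantiations works. If the per-cloud measurement literally reproduces the single-qubit marginal in expectation (your stated property), and the measurements on different clouds are independent, then $\mathbb{E}_\mu[\sigma_{i_1}\otimes\cdots\otimes\sigma_{i_k}]$ is a product of cloud marginals, while $\tau_{\vec{i}}$ can be entangled across clouds (e.g.\ nearly a singlet for a Heisenberg term), so $\Tr(h_{\vec{i}}\tau_{\vec{i}})$ can exceed $\mathbb{E}_\mu\Tr(h_{\vec{i}}\,\sigma_{i_1}\otimes\cdots\otimes\sigma_{i_k})$ by a factor that does not approach $1$ as $T\to\infty$; the monogamy argument that fixes this is exactly what your sketch skips. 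Alternatively, the measurement that \emph{does} see the spin structure (measure the total spin $J$, then apply the spin-$J$ coherent-state POVM) does \emph{not} reproduce the marginal: it rescales the Bloch vector by $(2J+2)/T$, which is precisely the source of Lieb's $J\mapsto J+1$ shift. With this choice, your per-interaction bound fails when the cloud sits in a low spin sector. Concretely with $k=1$, $T=8$, the cloud state a spin-$1$ coherent state in the $+\hat{z}$ direction, and $h=\tfrac12(I-Z)\ge 0$: one computes $\Tr(h\tau)=3/8$ but $(1+2/T)\,\mathbb{E}_\mu\Tr(h\sigma)=\tfrac{10}{8}\cdot\tfrac14=5/16<3/8$. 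The theorem itself is not violated (this $\rho$ is far from optimal), but your proof requires the per-interaction inequality for \emph{every} symmetrized $\rho$, and it fails here. The paper's argument avoids this structurally: because $H'$ is block-diagonal, the optimizer sits in a single sector $\vec{J}$, Lieb's inequality is applied to the whole Hamiltonian in that sector rather than interaction-by-interaction against a fixed global $\mu$, and positivity of $h_{\vec{i}}$ is invoked only at the very end to bound the sector-dependent factor $\prod_l(2J_{i_l}+2)$ by its maximum $(T+2)^k$ --- a place where your sketch also never identifies how $h_{\vec{i}}\ge 0$ is used. In short, closing your gap amounts to re-deriving Lieb's inequality from scratch, which the paper instead cites.
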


We remark that Theorem~\ref{thm:prodapprox} has a very similar flavour to the results of Brandao and Harrow \cite{brandao}. Indeed, their Corollary 4 shows that for a Hamiltonian with an interaction graph of high degree, there is a product state which is nearly optimal.
With an additional argument relating the optimal product state for $H'$ and the optimal product state for $H$, one could prove a variant of Theorem~\ref{thm:prodapprox} using their result.  

However we instead provide a novel proof of Theorem~\ref{thm:prodapprox} based on the representation theory of $SU(2)$ and an inequality due to Lieb \cite{lieb1973classical}.
This has the advantage of being simple, easily extends the results to $k>2$ and achieves better constant factors.
The disadvantage of this method is that it apparently cannot be generalised to systems of qu\emph{d}its\footnote{Qudits are quantum subsystems with a complex state space of dimension $d$. For example, qubits are qudits of dimension $d=2$.}. 
%If one desired a version of Theorem~\ref{thm:prodapprox} for qudits, then the methods of \cite{brandao} may be required.

The recent work of \cite{kallaugher2024complexity} classifies the complexity of the product state problem for $\mathcal{S}$-Hamiltonians, the family of Hamiltonians consisting of interactions from a fixed set $\mathcal{S}$ (with positive and negative weights allowed).  Therefore as a corollary of Theorem~\ref{thm:prodapprox}, we have:
\begin{corollary}
    Let $\mathcal{S}$ be a set of two qubit interactions. Then the Local Hamiltonian problem for $\mathcal{S}$-Hamiltonians (i.e. determining if the ground state energy is below $a$ or above $b$ for $b-a\ge 1/\poly(n)$) is:
    \begin{enumerate}
        \item in P, if all elements of $\mathcal{S}$ are 1-local.
        \item NP-hard otherwise. Furthermore, this holds even when the interaction strengths are $O(1)$.
    \end{enumerate}
    
\end{corollary}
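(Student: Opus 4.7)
The plan is to handle the two cases separately, using a trivial argument for case 1 and combining the dichotomy of \cite{kallaugher2024complexity} with Theorem~\ref{thm:prodapprox} for case 2. When all elements of $\mathcal{S}$ are 1-local, any $\mathcal{S}$-Hamiltonian decomposes as $H = \sum_i h_i$ with each $h_i$ acting on a single qubit, so the ground state is a product of local eigenstates and the ground state energy is $\sum_i \lambda_{\min}(h_i)$, computable in polynomial time by diagonalising $n$ single-qubit operators.

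For case 2, since case 1 fails, $\mathcal{S}$ must contain a genuine 2-local interaction, and by \cite{kallaugher2024complexity} the product state Local Hamiltonian problem for $\mathcal{S}$-Hamiltonians is NP-hard (even with $O(1)$ weights). That is, deciding whether $\OPTprod(H) \geq a$ or $\OPTprod(H) \leq b$ for some $a > b$ with $a - b \geq 1/\poly(n)$ is NP-hard. To lift this to hardness for $\OPT$, I first shift each term, setting $\tilde{h}_{\vec{i}} := h_{\vec{i}} + c_{\vec{i}} I$ with $c_{\vec{i}} := \|h_{\vec{i}}\| = O(1)$ so that $\tilde{h}_{\vec{i}} \geq 0$; both $\OPT$ and $\OPTprod$ then shift by the same constant $C := \sum c_{\vec{i}} = \poly(n)$. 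Applying Theorem~\ref{thm:prodapprox} to $\tilde{H}$ with parameter $T$ produces a Hamiltonian $\tilde{H}'$ satisfying
\[ T^2 (C + \OPTprod(H)) \leq \OPT(\tilde{H}') \leq (T+2)^2 (C + \OPTprod(H)), \]
and subtracting the induced constant recovers a bona fide $\mathcal{S}$-Hamiltonian $H'$ whose ground state energy differs from $\OPT(\tilde{H}')$ by a known amount. Choosing $T = \poly(n)$ large enough so that $(T+2)^2/T^2 < 1 + (a-b)/(2(C+b))$ ensures $T^2(C+a) - (T+2)^2(C+b) \geq 1/\poly(n)$, so the decision problem for $\OPT(H')$ remains NP-hard with $1/\poly$ gap.

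The main obstacle is ensuring the additive shift $C$ does not overwhelm the multiplicative approximation of Theorem~\ref{thm:prodapprox}: this works only because the $O(1)$ weights and polynomially many interactions keep $C = \poly(n)$, so taking $T$ a sufficiently large polynomial preserves a $1/\poly$-sized gap. A secondary subtlety is that the classification of \cite{kallaugher2024complexity} must supply NP-hardness with $1/\poly(n)$ precision rather than mere inapproximability, but this follows from any genuine NP-hardness reduction to an optimisation problem with polynomially bounded optima.
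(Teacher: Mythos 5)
Your proposal takes exactly the route the paper has in mind — the corollary is stated in the paper as a direct consequence of Theorem~\ref{thm:prodapprox} and the dichotomy of \cite{kallaugher2024complexity}, with the details left implicit, and you supply precisely the missing glue. Case 1 is the trivial single-qubit diagonalisation (with the small caveat that one should first group terms by qubit before minimising, since several $h_i$ may act on the same site — the energy is $\sum_q \lambda_{\min}(\sum_{i \text{ on } q} h_i)$ rather than literally $\sum_i \lambda_{\min}(h_i)$, but this is cosmetic). For case 2, the identity-shift to enforce $\tilde h_{\vec i}\ge 0$, the observation that the replication construction carries the shift through as a computable constant $T^2 C$, and the choice $T=\poly(n)$ to keep the gap $T^2(C+a)-(T+2)^2(C+b)\ge 1/\poly(n)$ are the right manipulations; you also correctly note that the weights of $H'$ are inherited unchanged from $H$, so $O(1)$ weights in the source give $O(1)$ weights in the target. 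One point worth flagging explicitly (you gesture at it but don't nail it down): the ``$O(1)$ interaction strengths'' clause of the corollary rests on \cite{kallaugher2024complexity} proving product-state NP-hardness with $O(1)$ weights in the first place — Theorem~\ref{thm:prodapprox} preserves weights but does not shrink them, so the constant-strength claim cannot be manufactured at the lifting step. Since the paper asserts this and you cite it, the proof is sound modulo that external input.
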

Note that this corollary is weaker than the classification of \cite{CubittMontanaro16}, except that here the interaction strengths are constant. 

Armed with Theorem~\ref{thm:prodapprox}, our focus in the second part of the paper is to prove hardness of approximation for finding the optimal product state $\rho_1\otimes \rho_2 \otimes \dots \otimes \rho_n$ for $\PROD(G)$. 
By decomposing a single qubit density matrix in the Pauli basis as $\rho=\frac{1}{2}(I+x_1X+x_2Y+x_3Z)$, we can identify $\rho$ with the 3-dimensional unit vector of Pauli coefficients $\vec{x}$, and the product state $\rho_1\otimes \rho_2 \otimes \dots \otimes \rho_n$ with the list of unit vectors $\vvec{x}=(\vec{x_1},\vec{x_2},\dots \vec{x_n})$.

The rank-$k$ constrained \MAXCUT{} problem, which we denote $\MAXCUT_k$ is a generalisation of $\MAXCUT$, where each vertex is assigned a unit vector in $\R^k$, i.e. an element of the sphere $S_{k-1}$.

\[ \MAXCUT_k(G)= \max_{\vvec{x}=(\vec{x_1},\vec{x_2},\dots, \vec{x_n})\in (S_{k-1})^n }  \sum_{ij \in E} \frac{1}{2}(1-\vec{x_i}\cdot \vec{x_j})\] 
Therefore, as has been observed previously (for example in \cite{GP19}), $\MAXCUT_3(G)=2\times \PROD(G)$; and note that $\MAXCUT=\MAXCUT_1$.
Bri{\"e}t, de Oliveira Filho and Vallentin showed how the Goeman-Williams approximation algorithm for \MAXCUT{} can be generalised to $\MAXCUT_k$ for every $k$ \cite{briet2010positive}. 
Strong evidence was provided in \cite{hwang2022unique} that  the approximation algorithms of \cite{briet2010positive} are optimal: they proved that, assuming the unique games conjecture, and a newly conjectured inequality referred to as the vector-valued Borel inequality, it is NP-hard to approximate $\MAXCUT_k$ to any better approximation ratio.

However, without these assumptions, the hardness of $\MAXCUT_k$ is less well understood. While $\MAXCUT_1$ was on Karp's original list of NP-complete problems \cite{Karp1972}, and Lov{\'a}sz and others have conjectured that $\MAXCUT_k$ is hard for other values of $k$ (see e.g. p236 of \cite{lovasz2019graphs}), it is only very recently that $\MAXCUT_3$ was proven to be NP-complete for additive inverse polynomial accuracy \cite{kallaugher2024complexity}, and no other hardness results are known for $k\neq 1,3$. 

Furthermore no hardness of approximation results are known without additional assumptions, except for $k=1$, where it is known that \MAXCUT{} is APX-complete \cite{papadimitriou1988optimization}. We extend these results to all $k>1$ in Theorem~\ref{thm:PTAS}.

\begin{theorem}
    \label{thm:PTAS}
    There is a PTAS reduction from $\MAXCUT_{k}$ to $\MAXCUT_{k+1}$. 
    Therefore $\MAXCUT_k$ is APX-complete for all (constant) $k\ge 1$.
\end{theorem}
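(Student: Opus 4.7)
The aim is to give a PTAS reduction from \MAXCUT$_k$ to \MAXCUT$_{k+1}$; APX-completeness of \MAXCUT$_k$ for every constant $k$ then follows by composing these reductions starting from \MAXCUT$_1$ \cite{papadimitriou1988optimization}, since \MAXCUT$_k\in\textup{APX}$ for every $k$ via the rank-constrained SDP algorithm of \cite{briet2010positive}.

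\textbf{Forward map.} Given an instance $G=(V,E)$ of \MAXCUT$_k$, the plan is to build $G'$ by attaching to $G$ a carefully designed ``pinning gadget'' $P$ that breaks the rotational symmetry of $\R^{k+1}$: in any near-optimal \MAXCUT$_{k+1}(G')$ assignment, $P$ should single out (up to global rotation) a $k$-dimensional subspace $H\subset\R^{k+1}$, and there should be a quantifiable energy penalty paid for each original $v\in V$ whose vector $\vec{x_v}$ has a nonzero component normal to $H$. A natural candidate backbone for $P$ is a heavy-weight clique $K_{k+1}$ whose unique optimal configuration (up to rotation) is a regular $k$-simplex spanning a $k$-dimensional subspace---exactly the intended $H$. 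The job of the rest of $P$ is then to transfer this rigidity onto the original vertices by connecting them to the simplex through further auxiliary vertices, arranged so that after eliminating the auxiliaries by taking their \MAXCUT$_{k+1}$-optimal values, the net induced contribution at each $v$ is flat to first order in $\vec{x_v}$ and strictly concave along the normal direction $\vec{n}$.

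\textbf{Rounding.} For the backward map, given an assignment $\vvec{y}$ achieving value $(1-\delta)\,\OPT$ on $G'$, the rigidity of $P$ lets us read off (up to a global rotation of $\R^{k+1}$) the subspace $H$ from the simplex vertices. Define $\vec{x_v}$ to be the orthogonal projection of $\vec{y_v}$ onto $H$ followed by renormalisation to unit length; this gives a feasible solution to \MAXCUT$_k(G)$. Letting $\varepsilon_v:=(\vec{y_v}\cdot\vec{n})^2$, a direct Cauchy--Schwarz computation yields $\vec{x_i}\cdot\vec{x_j}\le\vec{y_i}\cdot\vec{y_j}+O(\varepsilon_i+\varepsilon_j)$ for every edge $ij\in E$. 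The construction of $P$ forces $\sum_v\varepsilon_v = O(\delta\cdot m)$ in any $(1-\delta)$-approximate solution of $G'$, so summing this edge-by-edge bound shows that $\vvec{x}$ is a $(1-f(\delta))$-approximation to \MAXCUT$_k(G)$ with $f(\delta)\to 0$ as $\delta\to 0$, which is the defining property of a PTAS reduction.

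\textbf{Main obstacle.} The hardest step is the construction of $P$. Since every \MAXCUT{} edge contributes an energy that is \emph{linear} in the vertex inner product, the obvious attempts---attaching $v$ equally to a pair of antipodal reference vertices, for instance---give a contribution that is constant in $\vec{x_v}$ and therefore pin nothing at all. Producing a genuine quadratic penalty for the normal component requires introducing auxiliary vertices whose optimal positions themselves depend on $\vec{x_v}$; eliminating these auxiliaries leaves an effective contribution at $v$ that is nonlinear, and one must choose connection weights so that its Taylor expansion at any $\vec{x_v}\in H$ has vanishing first-order part and strictly positive second-order part along $\vec{n}$. Arranging simultaneously that (i) the first-order pull vanishes by symmetry, (ii) the second-order coefficient is uniformly positive and independent of $G$, and (iii) all the copies of this gadget attached throughout $V$ consistently select the same hyperplane $H$ rather than each choosing its own, is where the genuine technical work of the theorem lies and is the step that truly forces the jump from dimension $k$ to dimension $k+1$.
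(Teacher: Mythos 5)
You have correctly framed the high-level plan (build $G'$ from $G$ by attaching a reference gadget, round by projecting onto a distinguished subspace, compose to get APX-hardness for all $k$), and you have correctly put your finger on the central obstacle: since $\frac{1}{2}(1-\vec{x_i}\cdot\vec{x_j})$ is linear in each inner product, attaching a vertex $v$ directly to fixed reference vertices yields an effective potential that is linear in $\vec{x_v}$ and cannot create a quadratic pinning penalty. But your proposal stops at exactly the point where the actual work begins: you do not construct the gadget, you only list the properties it would need. The $K_{k+1}$ backbone you suggest only \emph{defines} a hyperplane $H$; it does not, by itself, push the $\vec{x_v}$ into $H$, and you leave unspecified the ``further auxiliary vertices'' and weights that would make the induced contribution flat to first order and strictly concave along $\vec{n}$. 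Checking condition (iii), that all copies of the gadget select a consistent $H$, is also nontrivial and unaddressed: nothing in the $K_{k+1}$ construction as stated prevents different local gadgets from choosing different hyperplanes. As written, this is a statement of the problem, not a proof.

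The paper avoids your main obstacle by a different idea that you appear to have missed: it does not try to pin $\vec{y_i}$ \emph{into} a hyperplane (i.e., orthogonal to some $\vec{z}$) at all. Instead it fixes the angle of $\vec{y_i}$ to a reference direction $\vec{z}$ at the constant value $2\pi/3$, which is accomplished by the simplest non-bipartite gadget---a triangle $\{i, a_{i,\alpha}, c_{i,\alpha}\}$. The triangle automatically supplies the ``auxiliary vertex whose optimal position depends on $\vec{x_v}$'' that you observe is needed: optimising over the third vertex of a triangle whose other two vertices subtend angle $\theta$ gives energy $\frac{9}{4}-\frac{1}{4}(2\cos\tfrac{\theta}{2}-1)^2$, which is maximised at $\theta=2\pi/3$ and drops quadratically in $(\theta-\tfrac{2\pi}{3})$. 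The projection-and-renormalise rounding map you describe still works, at the cost of a fixed multiplicative factor $\sin^2(2\pi/3)=3/4$ on the original edges, which is absorbed in the PTAS-reduction arithmetic. Finally, the reference direction $\vec{z}$ is obtained not from a heavy-weighted clique but from a constant-degree bipartite expander over the $\{a_{i,\alpha}\},\{b_{i,\alpha}\}$ vertices: the expander's spectral gap supplies a quadratic penalty on the deviation angles $\epsilon_{i,\alpha}$, which gives the global consistency you need in your condition (iii), and it does so with unit edge weights and bounded degree. So the gap in your proposal is genuine: the central gadget is absent, and the route you sketch toward it (orthogonality pinning) is exactly the one the paper explicitly declines in favour of a simpler and strictly easier-to-analyse fixed-angle triangle.
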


We remark that $\MAXCUT_2 = 2\times\PROD_{XY}(G)$, the problem of finding the optimal product state for the Hamiltonian $H_{XY}$:
\[H_{XY}=\sum_{ij \in E} w_{ij} \frac{1}{4}(I-X_iX_j -Y_iY_j)\]
and thus we also get a hardness of approximation result for $\PROD_{XY}(G)$.

To reduce from $\MAXCUT_{k}$ to $\MAXCUT_{k+1}$, it is necessary to find a way to effectively reduce the available dimensions at each vertex in a $\MAXCUT_{k+1}$ instance. One natural way to do this is to look for a gadget which forces two vertices to be orthogonal (or approximately orthogonal). Then by selecting a reference vertex $0$, and applying this gadget to all pairs $i,0$, one can ensure that $\vec{x_i}$ is orthogonal to $\vec{x_0}$ for all $i$, and thus the $\vec{x_i}$ are effectively in a lower dimensional space. Wright \cite{Wrightpersonal} had suggested a gadget based on a long odd length cycle for this task.

Our insight is to recognise that it is not necessary to make $\vec{x_i}$ and $\vec{x_0}$ orthogonal, but it is enough to make sure that $\vec{x_i}$ and $\vec{x_0}$ are at a fixed angle from each other for all $i$. For this task, almost any gadget will do, as long as it is not a bipartite graph, and we focus on the simplest example of a triangle. 
In a triangle graph, the optimal assignment for $\MAXCUT_k$ has an angle of $2\pi/3$ between each of the vertices on the triangle. 

We sketch a simplified version of our construction based on this idea. Take an instance $G$ of $\MAXCUT_k$ and add a vertex $c_i$ for each $i$ and an extra vertex $0$. Include the original edges of $G$, and add heavily weighted edges between the vertices of the triangle $0,i,c_i$ for all $i$ to create a new graph $G'$.
The heavily weighted triangle gadgets mean that any assignment $\vvec{x}$ close to optimal must have $\vec{x_i}=\cos(\theta_i)\vec{x_0}+\sin(\theta_i)\vec{\tilde{x}_i}$ for some $\vec{\tilde{x}_i}$ orthogonal to $\vec{x_0}$, and $\theta_i \approx 2\pi/3$.
Since the $\vec{\tilde{x}_i}$ are orthogonal to $\vec{x_0}$, we can interpret them living in $\R^{k}$, rather than $\R^{k+1}$.

Therefore the optimal value $\MAXCUT_{k+1}(G')$ will be approximately equal to a simple function of $\MAXCUT_k(G)$. Furthermore by projecting the vector $\vec{x_i}$ onto the space orthogonal to $\vec{x_0}$ and renormalising for all $i$, we have a map $g:(S_k)^{n+1}\rightarrow (S_{k-1})^n$ which maps good solutions for $\MAXCUT_{k+1}(G')$ to good solutions for $\MAXCUT_k(G)$.
This is essential for showing that the construction is a PTAS reduction.

While this sketch can be formalised into a PTAS reduction and hence prove $\MAXCUT_k$ is APX-hard, there are a couple of drawbacks of this simple approach. Firstly, the heavily weighted terms are undesirable: although they only need to be large constant values (not scaling with $n$, the number of vertices), we want a construction where all weights are equal to 1. Second, the reference vertex $0$ has very high degree as it is connected to all other vertices.

We overcome these difficulties by replacing the single reference vertex $0$ with a large bipartite expander graph of additional vertices. The optimal assignment for a bipartite graph is to have all vertices on one side of the bipartition set to the same vector $\vec{z}$ and all vertices on the other side of the partition set to $-\vec{z}$. Rather than have a heavily weighted triangle between $i$ and $0$, we instead have multiple unweighted triangles between $i$ and different vertices from one side of the bipartite graph. This then reproduces the behaviour of the simple reduction but without any weighted edges or high degree vertices.

We briefly contrast our proof method to that of \cite{kallaugher2024complexity}, where in their Theorem 1.5, they prove that it is NP-hard to determine the optimal value  of $\MAXCUT_3$ up to additive inverse polynomial accuracy. Rather than reduce from $\MAXCUT$, their method instead reduces from $3$-\textsc{COLORING}. The construction involves taking an instance $G$ of $3$-\textsc{COLORING}, and replacing each edge $ij$ with a clique of four vertices, then taking that new graph and replacing each edge with a clique of three vertices. Finally an additional vertex is added and connected to some of the vertices. This final vertex plays a similar role to the vertex $0$ described in the sketch of our construction above, in that it provides a reference state to fix a basis for the other vertices. 

Our proof method is arguably simpler, provides a direct connection to \MAXCUT, and obtains the stronger result of APX-completeness for $\MAXCUT_3$.

\section{Preliminaries}

\subsection{SU(2) representation theory}
%The proof of Theorem~\ref{thm:prodapprox} relies heavily on an inequality due to Lieb \cite{lieb1973classical}, and some basic representation theory of $SU(2)$, which we introduce here.

The Lie group $SU(2)$ is the group of unitary $2\times 2$ matrices with determinant $1$. The associated Lie algebra $\mathfrak{su}(2)$ is the space of traceless anti-hermitian $2\times 2$ matrices. 
Equivalently, any element $A\in \mathfrak{su}(2)$ can be decomposed as $A=a_1\sigma^{(1)}+a_2\sigma^{(2)}+a_3\sigma^{(3)}$ for real numbers $a_1,a_2,a_3$ and where the $\sigma^{(j)}$ are the Pauli matrices:
\[
\sigma^{(1)}=X=\begin{pmatrix}
    0 &1 \\
    1 &0 \\
\end{pmatrix},
\quad 
\sigma^{(2)}=Y=\begin{pmatrix}
    0 &-i \\
    i &0 \\
\end{pmatrix},
\quad 
\sigma^{(3)}=Z=\begin{pmatrix}
    1 &0 \\
    0 &-1 \\
\end{pmatrix}.\]
For convenience, we will also write $\sigma^{(0)}$ for the $2\times 2$ identity matrix $\sigma^{(0)}=I=\left(\begin{smallmatrix}
    1 &0 \\
    0 &1 \\
\end{smallmatrix}\right)$, so that $\{\sigma^{(j)}\}_{j=0}^4$ is a basis for all hermitian $2\times 2$ matrices.

Note that $\mathfrak{su}(2)$ is closed under the action of the matrix commutator (sometimes referred to as the Lie bracket) $[A,B]=AB-BA$.

A \emph{representation} $R$ of dimension $d$ is a map $R:\mathfrak{su}(2) \rightarrow M(\C^d)$, where $M(\C^d)$ is the space of complex square matrices of dimension $d$, such that 
\[[R(A),R(B)]=[A,B] \qquad\qquad \text{ for all }  A,B \in \mathfrak{su}(2).\]
The representation $R$ is \emph{irreducible} if there is no non-trivial subspace  $W\subset\C^d$ for which $R(A)$ maps $W$ to itself for all $A\in\mathfrak{su}(2)$.

For each positive integer $d$, there exists a unique (up to isomorphism) irreducible representation of $\mathfrak{su}(2)$ of dimension $d$.  We denote this $d$-dimensional irreducible representation $R_J$ where $J=\frac{d-1}{2}$, and will refer to it as the  \emph{spin-$J$ representation}, matching the terminology common in physics.

The Lie algebra $\mathfrak{su}(2)$ is \emph{semi-simple} which means that any representation $R$ is isomorphic to a direct sum of irreducible representations. 
Concretely, for a representation $R$ there is a unitary basis change $U$ on $\C^d$ such that:
\begin{equation}R(A)=U\left(\bigoplus_{J}R_J(A)\right)U^{\dagger} \qquad \qquad \text{ for all } A\in \mathfrak{su}(2).
    \label{eqn:semisimple}
\end{equation}

Given two representations $R_1$ and $R_2$ one can define the \emph{tensor product representation} $R_1\otimes R_2$ on $\C^{d_1}\otimes \C^{d_2}$:
\[R_1\otimes R_2(A)=R_1(A)\otimes I_{d_2} +I_{d_1}\otimes R_2(A)\]

We are particularly interested in the representation $\hat{R}$ produced by taking the tensor product of $T$ copies of the spin-$\frac{1}{2}$ representation. 
This acts on the space $(\C^2)^{\otimes T}$, which we can identify with the space of $T$ qubits:
\[\hat{R}(A)=\sum_{t=1}^T A_t\]
where $A_t$ denotes the operator $A$ acting on qubit $t$, and identity elsewhere.

The decomposition of $\hat{R}$ into irreducible representations as in equation \eqref{eqn:semisimple} is very well understood. In fact, it is best described via Schur-Weyl duality, which connects to the representation theory of the symmetric group.
However all we need is that there exists a unitary $U$ such that 
\begin{equation}
    \hat{R}(A)=U\left(\bigoplus_{J\in \mathcal{T}}R_J(A)\otimes I_{g_J}\right)U^{\dagger} \qquad \qquad \text{ for all } A\in \mathfrak{su}(2).
    \label{eq:hatRdecomp}
\end{equation}
where $I_{g_{J}}$ is the identity matrix on a space of dimension $g_J$ and $\mathcal{T}=\{T/2,T/2-1,T/2-2,\dots\}$ (continuing down to $0$ if $T$ is even, or $1/2$ if $T$ is odd).

\subsection{Local Hamiltonians and Lieb's inequality}
\label{sec:LHamLieb}
Let $H$ be a $k$-local Hamiltonian on $n$ qubits. Let $I$ be a subset of $[n]^k$ so that $H$ can be written as $H=\sum_{\vec{i} \in I}h_{\vec{i}}$ for interactions $h_{\vec{i}}$ which act non trivially only on the qubits in $\vec{i}=(i_1,i_2,\dots i_k)$.
Each $h_{\vec{i}}$ can be written in the Pauli basis as 
\begin{equation}
    h_\vec{i}=\sum_{\vec{a}\in \{0,1,2,3\}^k} M(\vec{i})_{a_1,a_2,\dots a_k}\sigma_{i_1}^{(a_1)}\sigma_{i_2}^{(a_2)} \dots \sigma_{i_k}^{(a_k)}
    \label{eq:hpaulibasis}
\end{equation}
for some order $k$ tensor $M(\vec{i})$.

To introduce Lieb's inequality, we define $H(\vec{J})$ and $\tilde{H}(\vec{J})$ two variants of $H$ , both of which depend on a vector $\vec{J}=(J_1,J_2, \dots , J_n) \in (\mathbb{Z}_{>0}/2)^n$.

First to define $H(\vec{J})$, we imagine that qubit $i$ is replaced by a spin-$J_i$ particle (i.e. a qudit of dimension $d_i=2J_i+1$).
 Then $H(\vec{J})$ looks the same as $H$, except that each $\sigma^{(a)}_i$ is replaced with $R_{J_i}(\sigma^{(a)})_i$.

 \begin{equation}
    \label{eq:HJ}
    H(\vec{J})=\sum_{\vec{i}\in I}\sum_{\mathbf{a}\in \{0,1,2,3\}^k} M(\vec{i})_{a_1,a_2,\dots a_k} R_{J_{i_1}}(\sigma^{(a)})_{i_1} R_{J_{i_2}}(\sigma^{(a)})_{i_2} \dots R_{J_{i_k}}(\sigma^{(a)})_{i_k}
 \end{equation}

The other variant $\tilde{H}(\vec{J})$ still acts on $n$ qubits and has the same structure as the original Hamiltonian $H$, but each $\sigma_{i}^{(a)}$ is replaced by $2J_{i}\sigma_{i}^{(a)}$. Equivalently $\tilde{H}(\vec{J})$ can be defined as:
 \begin{equation}
    \label{eq:tildeHJ}
    \tilde{H}(\vec{J})=\sum_{\vec{i}\in I}\left(\prod_{l=1}^k{2 J_{i_l}}\right)h_{\vec{i}}.
 \end{equation}

 An inequality of Lieb (equation 5.6 of \cite{lieb1973classical}) tells us that 

 \begin{equation}
     \label{eq:Liebinequality}
     \OPT(H(\vec{J}))\le \OPTprod(\tilde{H}(\vec{J}+\vec{1}))  
 \end{equation}
 where $\vec{1}$ represents the all ones vector.\footnote{The notation in \cite{lieb1973classical} does not match our notation here. The Hamiltonian $H(\vec{J})$ is the main object of study in \cite{lieb1973classical}, and their equation 5.6 is expressed in terms of the ground state energy  (lowest eigenvalue), which they denote $E^Q$, so the inequality appears in the reverse direction. Their notation $E^C$ is for the lowest energy of the classical Hamiltonian which we may think of as the product state energy of $\tilde{H}(\vec{J})$}

\subsection{The APX class and PTAS reductions}

An NP optimisation (NPO) problem is a triple $(I,S,m)$ where:
\begin{enumerate}
    \item $I$ is a set of instances.
    \item Given an instance $x \in I$, the set $S(x)$ is the set of valid solutions to $x$.
    \item Given an instance $x \in I$ and a valid solution $y \in S(x)$, the value achieved by $y$ is $m(x,y)\in \R$. 
\end{enumerate}
The sets $I$ and $S(x)$ are subsets of $\{0,1\}^{*}$ which can be recognised in polynomial time, and the function $m$ is computable in polynomial time.

An NP optimisation problem can be a either a maximisation problem or a minimisation problem. Given an instance $x$ of a maximisation NPO problem, the aim is to find a valid solution $y$ which maximises the value $m(x,y)$.
An algorithm that, for any instance $x$, finds a $y$ such that $m(x,y)$ is at least $\alpha$ times the optimal value, is said to have approximation ratio $\alpha$.

An NP optimisation problem is in the class APX if there exists a constant $\alpha$ and a polynomial time algorithm $T$ that achieves an approximation ratio $\alpha$. The class PTAS is a subset of APX, and consists of all NPO problems for which there exists a polynomial time approximation scheme. That is, a problem is in PTAS if there exists an algorithm $T$ such that for any instance $x$ and approximation ratio $\alpha$, $T(x,\alpha)$ returns a valid solution $y$ that achieves the approximation ratio $\alpha$, and $T$ runs in time $q_r(|x|)$ for some polynomial $q_r$.

When proving that a problem is hard for the class APX, it is necessary to use PTAS reductions, which are a type of approximation-preserving reduction which preserves containment in the class PTAS. One reason for this is that we want APX-hardness to imply that a problem does not have a polynomial time approximation scheme (assuming that APX$\neq$PTAS).

Formally a PTAS reduction from problem A to problem B is a tuple of functions $(f,g,c)$ such that, for any instance $x$ of A and $\alpha<1$:
\begin{enumerate}
    \item $f(x,\alpha)$ is an instance of $B$ computable in polynomial time in $|x|$.
    \item for any valid solution $y$ to $f(x,\alpha)$, then $g(x,y,\alpha)$ is a valid solution of $x$, computable in polynomial time in $|x|$ and $|y|$.
    \item for any valid solution $y$ to $f(x,\alpha)$, if $y$ achieves an approximation ratio at least $c(\alpha)$ for the instance $f(x,\alpha)$, then $g(x,y,\alpha)$ achieves an approximation ratio at least $\alpha$ for the instance $x$.
\end{enumerate}

Therefore a PTAS reduction provides a map between instances of A and B with related optimal values, but also map back such that `good' solutions of the B instance are mapped back to `good' solutions of the A instance. 
Note that a PTAS reduction from A to B provides an algorithm for A with approximation ratio $\alpha$, if there is an algorithm for B which achieves an approximation ratio $c(\alpha)$.
If $c$ is not the identity function, then it may be that the approximation ratio $c(\alpha)$ required for B is a lot higher than the desired approximation ratio $\alpha$ for A.

Other (simpler) kinds of approximation-preserving reductions are often used when proving APX-completeness results, such as L-reductions or AP-reductions, see~\cite{approxreductions} for more. 
However the reduction we use in Lemma~\ref{lem:main} constructs a different instance $G'$ depending on a parameter $\eta$ which is chosen to achieve particular approximation ratios $\alpha$. That is, the function $f(x,\alpha)$ in the PTAS reduction really does depend on $\alpha$, which is not allowed in the other types of approximation-preserving reductions.

\section{Reduction from product state problem}

In this section we prove Theorem~\ref{thm:prodapprox}, which relates the optimal product state value of a local Hamiltonian $H$ to the optimal value of another Hamiltonian $H'$, constructed from the same interactions as $H$. First we describe how to construct $H'$ from $H$.

Using the notation introduced in Section~\ref{sec:LHamLieb}, we write $H=\sum_{\vec{i} \in I}h_{\vec{i}}$ for interactions $h_{\vec{i}}$ which act non trivially only on the qubits in $\vec{i}=(i_1,i_2,\dots i_k)$, and can be expressed in the Pauli basis as in equation \eqref{eq:hpaulibasis}, so that 
\begin{equation}
    H
    =\sum_{\vec{i}\in I}\sum_{\mathbf{a}\in \{0,1,2,3\}^k} M(\vec{i})_{a_1,a_2,\dots a_k}\sigma_{i_1}^{(a_1)}\sigma_{i_2}^{(a_2)} \dots \sigma_{i_n}^{(a_k)}
    %&=\sum_{\vec{i}}\sum_{\mathbf{a}\in \{0,1,2,3\}^k} M(\vec{i})_{a_1,a_2,\dots a_k}\left(\sum_{t_1\in [T]}\sigma_{(i_1,t_1)}^{(a_1)}\right)\sigma_{(i_2,t_2)}^{(a_2)} \dots \sigma_{(i_k,t_k)}^{(a_k)}\\
\end{equation}

We construct $H'$ by replacing each qubit of $H$ with a cluster of $T$ qubits for some parameter $T$ which is used to control the quality of approximation. We label the $nT$ qubits by $(i,t)$ for some $i\in [n]$ and $t \in [T]$. Each term $h_{\vec{i}}$ in $H$ is replaced with the same interaction applied to qubits $(i_1,t_1),(i_2,t_2),\dots (i_k,t_k)$ for all combinations of $t_1,t_2,\dots t_k$. The overall Hamiltonian $H'$ is therefore:
\begin{equation}
    H'
    =\sum_{\vec{t}\in [T]^k}\sum_{\vec{i}\in I}\sum_{\mathbf{a}\in \{0,1,2,3\}^k} M(\vec{i})_{a_1,a_2,\dots a_k}\sigma_{(i_1,t_1)}^{(a_1)}\sigma_{(i_2,t_2)}^{(a_2)} \dots \sigma_{(i_k,t_k)}^{(a_k)}.
\end{equation}

\begin{figure}
    \centering
    \includegraphics[scale=0.7]{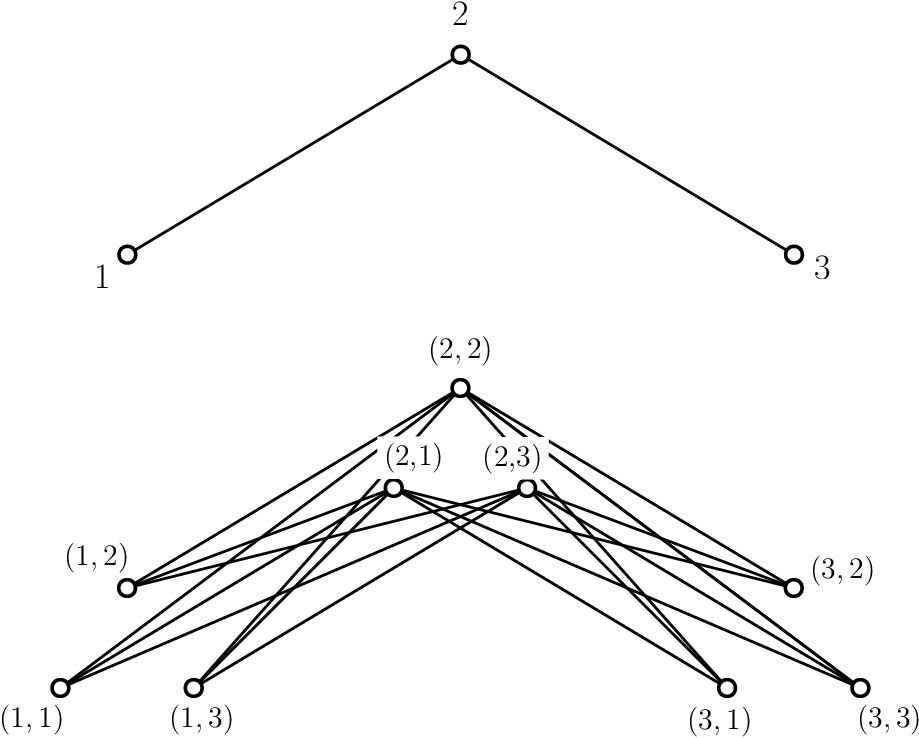}
    \caption{Example interaction graph of $H'$. At the top of the figure is a simple interaction graph of a 2-local Hamiltonian $H$ on $n=3$ qubits labelled $1,2,3$. Beneath is the interaction graph of the new Hamiltonian $H'$. Each qubit of $H$ has been replaced by a collection of $T=3$ qubits in $H'$.}
\end{figure}

%Having precisely described the construction, we can now state and prove Theorem~\ref{thm:prodapprox}.
\begin{reptheorem}{thm:prodapprox}
    Let $H=\sum_{\vec{i}\in I} h_{\vec{i}}$ be a $k$-local qubit Hamiltonian with $h_{\vec{i}}\geq 0$ for all $\vec{i} \in I$. Let $H'$ be constructed from $H$ as above. Then
    \[T^k\OPTprod(H) \le \OPT(H')\le (T+2)^k\OPTprod(H)\]
\end{reptheorem}

\begin{proof}
    First we prove the lower bound: let $\rho=\rho_1\otimes \rho_2 \otimes \dots \otimes \rho_n$ be the product state that achieves $\Tr(H\rho)=\OPTprod(H)$.
    Extend this to a product state $\rho'$ on $nT$ qubits where qubit $(i,t)$ is in state $\rho_i$ for all $t$. 
    Then 
    \[\OPT(H')\ge \Tr(H'\rho')=\sum_{\vec{t}\in [T]^k}\Tr(H\rho)=T^k \OPTprod(H).\]
    It remains to prove the upper bound.
    Let $\S^{(a)}_i=\sum_{t\in T} \sigma^{(a)}_{(i,t)}$ and observe that 
    \begin{align*}
        H'&=\sum_{\vec{i}\in I}\sum_{\mathbf{a}\in \{0,1,2,3\}^k} M(\vec{i})_{a_1,a_2,\dots a_k} \sum_{\vec{t}\in [T]^k}\sigma_{(i_1,t_1)}^{(a_1)}\sigma_{(i_2,t_2)}^{(a_2)} \dots \sigma_{(i_k,t_k)}^{(a_k)}\\
        &=\sum_{\vec{i}}\sum_{\mathbf{a}\in \{0,1,2,3\}^k} M(\vec{i})_{a_1,a_2,\dots a_k}\left(\sum_{t_1\in [T]}\sigma_{(i_1,t_1)}^{(a_1)}\right)\left(\sum_{t_2\in [T]}\sigma_{(i_2,t_2)}^{(a_1)}\right) \dots \left(\sum_{t_k\in [T]}\sigma_{(i_k,t_k)}^{(a_1)}\right)\\
        &= \sum_{\vec{i}\in I}\sum_{\mathbf{a}\in \{0,1,2,3\}^k} M(\vec{i})_{a_1,a_2,\dots a_k}  \S_{i_1}^{(a_1)}\S_{i_2}^{(a_2)} \dots \S_{i_k}^{(a_k)} .
    \end{align*}

    The operator $\S^{(a)}_i$ is the tensor product representation $\hat{R}$ (of $T$ copies of the spin $\frac{1}{2}$ representation) applied to $\sigma^{(a)}$, and it acts on the set of $T$ qubits $\{(i,t)\}_{t\in [T]}$ which replaced the original qubit $i$ in $H$.
    Therefore, as in equation~\eqref{eq:hatRdecomp} the tensor product representation $\hat{R}$ decomposes into a direct sum of irreducible representations:
    \[\S^{(a)}=U\left(\bigoplus_{J \in \mathcal{T}} R_{J}(\sigma^{(a)})\otimes I_{g_{J}}\right)U^{\dagger}\]
    for some unitary $U$, and where the sum is over all $J\in \mathcal{T}=\{T/2,T/2-1,T/2-2,\dots\}$ (continuing down to $0$ if $T$ is even, or $1/2$ if $T$ is odd).

    Therefore 
    \[H'=U^{\otimes n}\left(\bigoplus_{\vec{J}\in \mathcal{T}^n} H(\vec{J})\otimes I_{g_\vec{J}} \right) (U^{\dagger})^{\otimes n}\]
    where for each $\vec{J}=(J_1,J_2,\dots,J_n)\in \mathcal{T}^n$, $I_{g_{\vec{J}}}$ is the identity matrix on a space of dimension $g_{\vec{J}}=g_{J_1}\times g_{J_2}\dots \times g_{J_n}$, and $H(\vec{J})$ is the variant of $H$ where each qubit $i$ is replaced by a spin $J_i$ particle, as defined in equation \eqref{eq:HJ}.

     So the optimal energy for $H'$ is the maximum out of the optimal energies $\OPT(H(\vec{J}))$ out of all possible values of $\vec{J}\in \mathcal{T}^n$
    \begin{equation}
        \label{eq:H'maxHJ}
        \OPT(H')=\max_{\vec{J}\in \mathcal{T}^n}\OPT(H(\vec{J})).
    \end{equation}

    Applying Lieb's inequality (equation \eqref{eq:Liebinequality} above) to this gives:
    \begin{equation}
        \label{eq:H'maxHJ+1}
        \OPT(H') \le \max_{\vec{J}\in \mathcal{T}^n}\OPTprod(\tilde{H}(\vec{J}+\vec{1}))
    \end{equation}
    where $\vec{1}$ is the all ones vector and $\tilde{H}(\vec{J})$ is as defined in equation~\eqref{eq:tildeHJ}.

    Since $h_{\vec{i}}\geq 0 $ for all $\vec{i}$, increasing the value of $J_i$ for some $i$ can only increase the value of $\OPTprod(\tilde{H}(\vec{J}+\vec{1}))$. 
    %To see this, let $\tilde{\rho}$ be the product state that achieves $\OPTprod(\tilde{H}(\vec{J}))$
    Therefore the maximum in equation \eqref{eq:H'maxHJ+1} is obtained when each $J_i$ is maximised, i.e. at $\vec{J^*}=(T/2,T/2,\dots,T/2)\in \mathcal{T}^n$. For this $\vec{J^*}$, we have $\tilde{H}(\vec{J^*}+\vec{1})=(T+2)^k H$ and so 
    \[\OPT(H') \le \OPTprod(\tilde{H}(\vec{J^*}+\vec{1}))=(T+2)^k\OPTprod(H).\]

\end{proof}

\section{Rank constrained Max-Cut}

In this section we prove Theorem~\ref{thm:PTAS}, that $\MAXCUT_k$ is APX-complete, by demonstrating a PTAS reduction from $\MAXCUT_k$ to $\MAXCUT_{k+1}$.

We first introduce some notation for the cost function which is maximised in $\MAXCUT_{k}$. 
Let $F_k^G :(S_{k-1})^n \rightarrow \R$ be defined by:
\[F_k^G(\vvec{x})=\sum_{ij \in E} w_{ij}\frac{1}{2}(1-\vec{x_i}\cdot \vec{x_j})\] 
so that $\MAXCUT_k(G)=\max_{\vvec{x}}F_k^{G}(\vvec{x})$. 

In analogy to \QMAXCUT, we sometimes refer to $F_k^{G}(\vvec{x})$ as the energy of $\vvec{x}$. For the remainder we now take $w_{ij}=1$.

The main technical lemma which is used to prove Theorem~\ref{thm:PTAS} is Lemma~\ref{lem:main}:
\begin{lemma}
    \label{lem:main}
Given an instance $G$ of $\MAXCUT_k$ and an error parameter $\eta$, we construct an instance $G'$ of $\MAXCUT_{k+1}$ which is only polynomilally larger than $G$ and prove that there exist constants $c,C$ such that:
\begin{enumerate}
    \item $\MAXCUT_{k+1}(G') \ge C\eta m +\frac{3}{4} \MAXCUT_k(G)$
    \item there is a map $g:(S_{k})^{n+6\eta m}\rightarrow (S_{k-1})^n$ such that for any $\vvec{y} \in (S_{k})^{n+6\eta m}$:
    \[F_{k+1}^{G'}(\vvec{y}) \le C\eta m +\frac{cm}{\eta}+\frac{3}{4}F_k^{G}(g(\vvec{y})).\]
\end{enumerate}
\end{lemma}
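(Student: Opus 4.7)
Plan. The strategy is to realize the introduction's sketch concretely. I would construct $G'$ by taking $G$, adding a constant-degree bipartite expander $B=(L\cup R, E_B)$ with $|L|=|R|=3\eta m$ (e.g., a Ramanujan graph), and attaching $\eta$ unweighted triangles to each $i\in V(G)$, each triangle consisting of $i$ together with an edge $(u,v)\in E_B$ (with a balanced allocation so that each $B$-edge hosts roughly the same number of triangles). The vertex count is $n+6\eta m$, as required, and the graph has bounded degree once $\eta$ is fixed.

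Lower bound. Take $\vvec{x}^*$ achieving $\MAXCUT_k(G)$. Fix a unit $\vec{z}\in\R^{k+1}$ orthogonal to the copy of $\R^k$ containing $\vvec{x}^*$, and set $\vec{y_u}=\vec{z}$ for $u\in L$, $\vec{y_v}=-\vec{z}$ for $v\in R$, and $\vec{y_i}=-\tfrac12\vec{z}+\tfrac{\sqrt3}{2}\vec{x_i^*}$ for $i\in V(G)$. The identity $\vec{y_i}\cdot\vec{y_j}=\tfrac14+\tfrac34\,\vec{x_i^*}\cdot\vec{x_j^*}$ gives each $G$-edge exactly $\tfrac34$ of its $\MAXCUT_k$ value; each $B$-edge contributes $1$; the two external edges of each triangle contribute $\tfrac34+\tfrac14=1$. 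Summing yields $F_{k+1}^{G'}(\vvec{y})=|E_B|+n\eta+\tfrac34\MAXCUT_k(G)$, giving part 1 for $C$ chosen to dominate $(|E_B|+n\eta)/(\eta m)=O(1)$.

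Upper bound. For arbitrary $\vvec{y}\in(S_k)^{n+6\eta m}$, extract a reference direction by $\vec{w}=\tfrac1{|L|}\sum_{u\in L}\vec{y_u}-\tfrac1{|R|}\sum_{v\in R}\vec{y_v}$ and $\vec{z}=\vec{w}/\|\vec{w}\|$ (arbitrary unit vector if $\vec{w}=0$). A Cheeger-type inequality driven by the spectral gap of $B$ bounds the $B$-energy deficit from below by $\Omega\!\left(\sum_{u\in L}\|\vec{y_u}-\vec{z}\|^2+\sum_{v\in R}\|\vec{y_v}+\vec{z}\|^2\right)$. Decompose $\vec{y_i}=c_i\vec{z}+s_i\vec{\tilde{y}_i}$ with $\vec{\tilde{y}_i}\perp\vec{z}$, $s_i\ge 0$, $c_i^2+s_i^2=1$, and define $g(\vvec{y})_i=\vec{\tilde{y}_i}$ (identifying $\vec{z}^\perp\cong\R^k$; pick arbitrarily if $\vec{\tilde{y}_i}=0$). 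Writing $\delta_i=c_i+\tfrac12$, the $\eta$ triangles at $i$ force the triangle-energy deficit at $i$ to be $\Omega(\eta\delta_i^2)$ modulo the bipartite deviations already charged. Each $G$-edge's deviation from its ideal contribution $\tfrac34\cdot\tfrac12(1-\vec{\tilde{y}_i}\cdot\vec{\tilde{y}_j})$ is $O(|\delta_i|+|\delta_j|)$; Cauchy--Schwarz gives $\sum_{ij\in E}(|\delta_i|+|\delta_j|)\le O(\sqrt{n\sum_i\delta_i^2})\le O(\sqrt{nD/\eta})$ where $D$ is the total gadget deficit. Combining: $F_{k+1}^{G'}(\vvec{y})\le(|E_B|+n\eta)-D+\tfrac34 F_k^G(g(\vvec{y}))+O(\sqrt{nD/\eta})$, and maximising $-D+O(\sqrt{nD/\eta})$ over $D\ge 0$ yields at most $O(n/\eta)=O(m/\eta)$ (the optimum sits at $D=\Theta(n/\eta)$), giving part 2 with the stated rate $cm/\eta$.

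Main obstacle. The delicate step is the chained quantitative propagation of approximations: bipartite-energy deficits must convert cleanly to vertex deviations $\|\vec{y_u}\mp\vec{z}\|$ via Cheeger, and these must compose with the triangle-energy deficits to pin down $\delta_i$ at every $i$. The key bookkeeping is ensuring that the portion of bipartite deviations absorbed into the Cheeger bound is not re-charged in the per-$i$ triangle analysis; making this precise is what requires the balanced allocation of triangles across $B$-edges and the explicit spectral gap of the expander. Once the charging is laid out, the final Cauchy--Schwarz trade-off is routine, and the $1/\eta$ gain is precisely the dividend of using $\eta$ triangles per vertex rather than one.
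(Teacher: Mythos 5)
Your construction places both non-$i$ vertices of each triangle gadget inside the bipartite expander, one on each side of the bipartition. This breaks the gadget. Concretely, when the expander is at (or near) its optimum, $\vec{y_u}=\vec{z}$ and $\vec{y_v}=-\vec{z}$ for $u\in L$, $v\in R$; then for any choice of $\vec{y_i}$ the two external triangle edges contribute
\[
\tfrac12(1-\vec{y_i}\cdot\vec{z})+\tfrac12(1+\vec{y_i}\cdot\vec{z})=1,
\]
a constant independent of the angle $\theta_i$ between $\vec{y_i}$ and $\vec{z}$. Your claimed $\Omega(\eta\delta_i^2)$ triangle-energy deficit is therefore zero: the gadget exerts no constraint on $\vec{y_i}$ at all. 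This is not a bookkeeping subtlety about double-charging the Cheeger penalty — it is an exact cancellation. A simple counterexample: take $G=K_{n,n}$, set every $\vec{y}$ on one side of $G$ to $\vec{z}$ and every $\vec{y}$ on the other side to $-\vec{z}$. The expander and all gadget edges saturate, and the $G$-edges contribute $m$ (not $\tfrac34 F_k^G(g(\vvec{y}))$, which can be forced to $0$ since all projections onto $\vec{z}^\perp$ vanish). The resulting inequality requires $m\le cm/\eta$, contradicting constant $c$ once $\eta$ is large.

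The missing idea is that the triangle must contain a \emph{free} vertex with no other constraints, as in the paper's construction with the dedicated vertices $c_{i,\alpha}$. That free vertex optimizes to $-(\vec{y_i}+\vec{y_{a_{i,\alpha}}})/\|\vec{y_i}+\vec{y_{a_{i,\alpha}}}\|$, after which the triangle's optimal energy is $\tfrac94-\tfrac14(2\cos\tfrac{\theta}{2}-1)^2$ — strictly maximized at $\theta=2\pi/3$ with a quadratic penalty for deviations (Lemma~\ref{lem:triangle}). It is precisely that quadratic penalty, combined with the bipartite penalty $\Omega(\epsilon_{i,\alpha}^2)$, that pins $\theta_i$ near $2\pi/3$ and makes the $G$-edge error controllable via $|\phi_i|$ terms. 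Once you have the free vertices, the remainder of your plan (the reference direction $\vec{z}$, the projection map $g$, and the $-\eta c'\phi_i^2+|\phi_i|$ trade-off — the paper does this per vertex by completing the square, you do it globally by Cauchy--Schwarz, both give $O(m/\eta)$) is sound and parallels the paper. But as written, the construction fails at the gadget.
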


We first prove how Theorem~\ref{thm:PTAS} follows from Lemma~\ref{lem:main}, delaying the proof of Lemma~\ref{lem:main} until Section~\ref{sec:lemproof}.

\begin{reptheorem}{thm:PTAS}
    There is a PTAS reduction from $\MAXCUT_{k}$ to $\MAXCUT_{k+1}$. 
    Therefore $\MAXCUT_k$ is APX-complete for all (constant) $k\ge 1$.
\end{reptheorem}

\begin{proof}
    To prove that the construction of Lemma~\ref{lem:main} is a PTAS reduction, we need to show that for any desired $\beta<1$, we can choose $\eta$ and $\alpha$ such that:
    \[F_{k+1}^{G'}(\vvec{y}) \ge \alpha \MAXCUT_{k+1}(G') 
    \quad \Rightarrow \quad F_k^G(g(\vvec{y}))\ge \beta \MAXCUT_k(G)\]
    So suppose that $F_{k+1}^{G'}(\vvec{y})\ge \alpha\MAXCUT_{k+1}(G')$ for some $\alpha$ to be chosen later. Using both parts of Lemma~\ref{lem:main}, we have 
    \begin{align*}
        C\eta m +\frac{cm}{\eta}+\frac{3}{4}F_k^{G}(g(\vvec{y}))
         & \ge F_{k+1}^{G'}(\vvec{y}) \\
         & \ge \alpha\MAXCUT_{k+1}(G') \\
         &\ge \alpha(C\eta m +\frac{3}{4} \MAXCUT_k(G))
    \end{align*}
    Rearranging this gives
    \begin{align*}
        F_k^{G}(g(\vvec{y})) &\ge \alpha\MAXCUT_{k}(G) +[(\alpha-1)C\eta -c/\eta]4m/3\\
        &\ge \left[\alpha+\frac{8}{3}(\alpha-1)C\eta-\frac{8c}{3\eta}\right]\MAXCUT_k(G)
    \end{align*}
    where we've used the fact that $\MAXCUT_k(G)\ge m/2$ and $[(\alpha-1)C\eta -c/\eta]<0$.
    Therefore it suffices to choose 
    \[\eta\ge \frac{16c}{3(1-\beta)}\quad \text{ and }\quad \alpha =  \frac{\frac{1+\beta}{2}+\frac{8C}{3}}{1+\frac{8C}{3}}\]
    so that $\alpha+\frac{8}{3}(\alpha-1)C\eta =\frac{1+\beta}{2}$ and $\frac{8c}{3\eta}\le (1-\beta)/2$.
\end{proof}

%\subsection{A single edge}
%Consider a graph $G$ consisting of a single edge between two vertices $1$ and $2$. 
%Then $F_k^G(\vec{x_1},\vec{x_2})=\frac{1}{2}(1-\vec{x_1}\cdot \vec{x_2})=\frac{1}{2} (1-\cos \theta)$ where $\theta$ is the angle between $\vec{x_1}$ and $\vec{x_2}$.

%Therefore if $\vec{x_1}$ and $\vec{x_2}$ change by an angle $\phi_1$ and $\phi_2$ respectively, then $\theta$ can change by at most $|\phi_1|+|\phi_2|$.

Before we prove Lemma~\ref{lem:main}, we describe the construction of $G'$ from $G$ and study the separate parts of $G'$, in order to build up enough understanding to complete the proof of Lemma~\ref{lem:main} in Section~\ref{sec:lemproof}.

\subsection{Construction}
\label{sec:construction}

Given a graph $G=(V,E)$ with $n=|V|$ vertices and $m=|E|$ edges, we extend $G$ to a graph $G'$ on vertex set $V\cup V'$ where $|V'|=6\eta m$ for some parameter $\eta$ which controls the quality of approximation.
For each vertex $i\in V$ of degree $d_i$, we add $3\eta d_i$  vertices $a_{i,\alpha},b_{i,\alpha},c_{i,\alpha}$ for $\alpha\in[\eta d_i]$.
There are three types of edges in $E'$:
\begin{enumerate}
    \item Edges in $E$ from the original graph $G$ between the vertices in $V$.
    \item There is a bipartite expander graph between the `$a$' vertices in $A=\{a_{i,\alpha}\}$ and the `$b$' vertices $B=\{b_{i,\alpha}\}$. This is chosen to be a $d$-regular expander graph for constant $d$, with constant spectral gap in the adjacency matrix.
    \item For each $i,\alpha$, there is a triangle of edges between vertices $i$, $a_{i,\alpha}$,$c_{i,\alpha}$. 
\end{enumerate}

\begin{figure}
    \centering
    \includegraphics[scale=0.8]{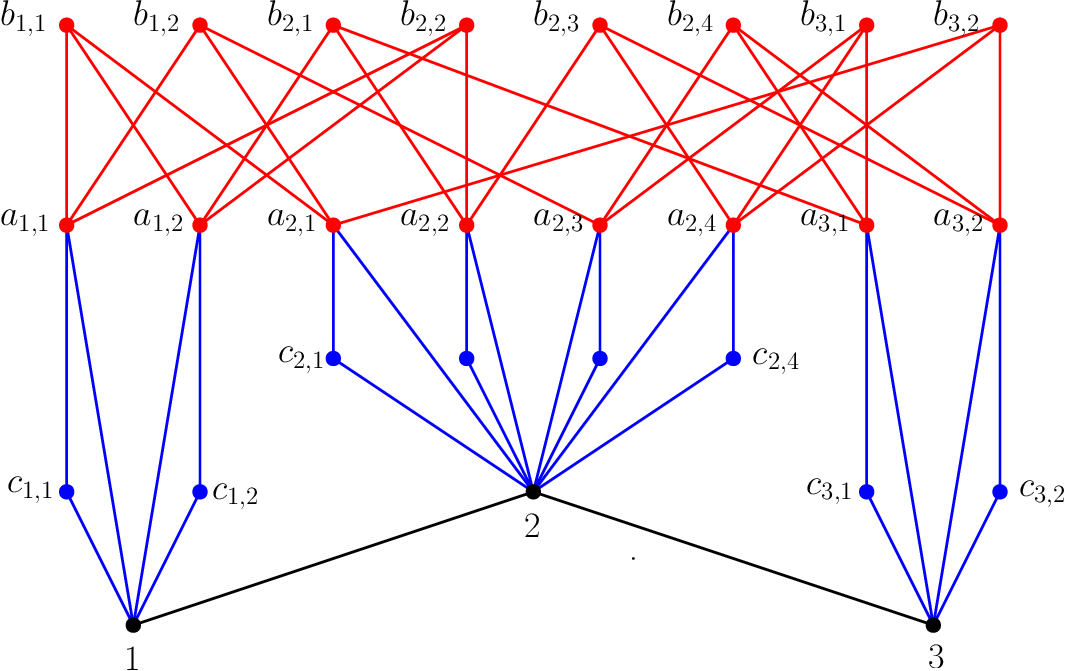}
    \caption{Example of the construction of $G'$ described in Section~\ref{sec:construction}. Here, the original graph $G$ is a path graph on three vertices $\{1,2,3\}$ and the parameter $\eta=2$. The original edges and vertices of $G$ are coloured black; the edges in the triangle gadgets are colooured blue, and the edges in the $d$-regular bipartite graph are coloured red.}
\end{figure}

\subsection{Triangle gadget}
First we consider a triangle gadget, consisting of three vertices $0,1,2$ with edges between all pairs.

\begin{lemma}
    \label{lem:triangle}
    Let $G_{\Delta}=(V,E)$ where $V=\{0,1,2\}$ and $E=\{01,12,02\}$.
    Let $\vvec{x}=(\vec{x_0},\vec{x_1},\vec{x_2})$ be an assignment of unit vectors to the vertices, and suppose $\vec{x_0} \cdot\vec{x_1}=\cos\theta$ for some angle $\theta\in[0,\pi)$.
     Then 
     \[\max_{\vec{x_2}}F^{G_{\Delta}}_k(\vvec{x})
     %= \frac{1}{2}(3+\sqrt{2+2\cos\theta}-\cos\theta )
     =\frac{9}{4}-\frac{1}{4}(2\cos\tfrac{\theta}{2}-1)^2
     \le  \frac{9}{4}-\frac{9}{16\pi^2}(\theta-\tfrac{2\pi}{3})^2\]
\end{lemma}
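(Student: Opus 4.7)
The plan is to reduce the maximization over $\vec{x_2}$ to a one-dimensional question in $\theta$ and then bound the resulting function by a quadratic. Writing out the cost,
\[F_k^{G_\Delta}(\vvec{x}) = \tfrac{3}{2} - \tfrac{1}{2}\cos\theta - \tfrac{1}{2}\vec{x_2}\cdot(\vec{x_0}+\vec{x_1}),\]
which is linear in $\vec{x_2}$, so the maximum over unit $\vec{x_2}$ is attained at $\vec{x_2} = -(\vec{x_0}+\vec{x_1})/\|\vec{x_0}+\vec{x_1}\|$ and equals $\tfrac{3}{2} - \tfrac{1}{2}\cos\theta + \tfrac{1}{2}\|\vec{x_0}+\vec{x_1}\|$. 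The half-angle identity $\|\vec{x_0}+\vec{x_1}\|^2 = 2(1+\cos\theta) = 4\cos^2(\theta/2)$, together with $\theta/2 \in [0,\pi/2)$, gives $\|\vec{x_0}+\vec{x_1}\| = 2\cos(\theta/2)$. Substituting $\cos\theta = 2\cos^2(\theta/2)-1$ collects everything into $2 + \cos(\theta/2) - \cos^2(\theta/2)$, and completing the square in $\cos(\theta/2)$ rewrites this as $\tfrac{9}{4} - \tfrac{1}{4}(2\cos(\theta/2)-1)^2$. This establishes the equality.

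For the inequality, after taking square roots it suffices to show $|2\cos(\theta/2) - 1| \ge \tfrac{3}{2\pi}|\theta - 2\pi/3|$ on $[0,\pi)$. Set $g(\theta) = 2\cos(\theta/2) - 1$; then $g$ is concave on $[0,\pi]$ since $g''(\theta) = -\tfrac{1}{2}\cos(\theta/2) \le 0$, it vanishes at $\theta = 2\pi/3$, and $g(0)=1$ with $g'(2\pi/3) = -\sqrt{3}/2$. On $[0,2\pi/3]$ concavity implies $g$ lies above the chord joining $(0,1)$ and $(2\pi/3,0)$, which is precisely $\tfrac{3}{2\pi}(2\pi/3 - \theta)$, so $g(\theta) \ge \tfrac{3}{2\pi}(2\pi/3-\theta)$. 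On $[2\pi/3,\pi)$, $-g$ is convex with $-g(2\pi/3) = 0$, so it lies above its tangent $\tfrac{\sqrt{3}}{2}(\theta - 2\pi/3)$, which dominates $\tfrac{3}{2\pi}(\theta - 2\pi/3)$ since $\sqrt{3}/2 > 3/(2\pi)$. Squaring then gives the claimed bound.

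The whole argument is routine calculus; the only mild subtlety is confirming that the specific constant $9/(16\pi^2)$ is exactly what the chord/tangent argument produces. This works out because the chord slope on $[0,2\pi/3]$ is exactly $-3/(2\pi)$ (by construction, since it is $-g(0)/(2\pi/3)$), while on the right of $2\pi/3$ the tangent slope $\sqrt{3}/2\approx 0.866$ strictly exceeds $3/(2\pi)\approx 0.477$. One should also note in passing that the optimal $\vec{x_2}$ lies in the span of $\vec{x_0},\vec{x_1}\subset\R^k$ and hence is a valid element of $S_{k-1}$ for every $k\ge 1$, so the lemma holds uniformly in the rank parameter.
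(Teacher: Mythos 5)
Your proof is correct and follows essentially the same route as the paper's: optimize linearly over $\vec{x_2}$, use the half-angle identity to reduce to a function of $\theta$, then bound $(2\cos(\theta/2)-1)^2$ below by the square of the linear function via concavity. The only (minor) variation is in the inequality step for $\theta>2\pi/3$: the paper uses the fact that the concave function $2\cos(\theta/2)-1$ equals the line $1-\tfrac{3\theta}{2\pi}$ at both $\theta=0$ and $\theta=2\pi/3$, so by concavity it lies above the line between the two intersection points and below it beyond — directly yielding both needed sign patterns — whereas you instead compare against the tangent at $2\pi/3$ with slope $-\sqrt{3}/2$ and then note $\sqrt{3}/2 > 3/(2\pi)$; both arguments are valid.
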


\begin{proof}
    \begin{align*}
        F^{G_{\Delta}}_k(\vvec{x})
        &=\frac{1}{2}\sum_{uv \in E} 1-\vec{x_u}\cdot \vec{x_v}\\
        &=\frac{1}{2}(3 -\vec{x_2}\cdot(\vec{x_0}+\vec{x_1})- \vec{x_0}\cdot \vec{x_1})
    \end{align*}
This expression is maximised by choosing $\vec{x_2}$ in the $-(\vec{x_0}+\vec{x_1})$ direction. i.e. by setting $\vec{x_2}=-\frac{\vec{x_0}+\vec{x_1}}{\norm{\vec{x_0}+\vec{x_1}}}$. Substituting this in above gives 
\[\max_{\vec{x_2}}F^{G_{\Delta}}_k(\vvec{x})
=\frac{1}{2}(3 +\norm{\vec{x_0}+\vec{x_1}}- \vec{x_0}\cdot \vec{x_1})\]
Note that the above substitution is only possible if $\vec{x_0} +\vec{x_1} \neq 0$. If $\vec{x_0} +\vec{x_1} =0$, then $F^{G_{\Delta}}_k(\vvec{x})$ is independent of $\vec{x_2}$, and the above expression for $\max_{\vec{x_2}}F^{G_{\Delta}}_k(\vvec{x})$ still holds.

Now use the fact that $\norm{\vec{x_0}+\vec{x_1}}^2=(\vec{x_0}+\vec{x_1})\cdot (\vec{x_0}+\vec{x_1}) = 2+2\vec{x_0}\cdot \vec{x_1}$ to get 
\[\max_{\vec{x_2}}F^{G_{\Delta}}_k(\vvec{x})
=\frac{1}{2}(3 +\sqrt{2+2\vec{x_0} \cdot\vec{x_1}}- \vec{x_0}\cdot \vec{x_1}).\]
Substitute in $\vec{x_0} \cdot\vec{x_1}=\cos\theta$ and rearrange to show 
\[\max_{\vec{x_2}}F^{G_{\Delta}}_k(\vvec{x})=\frac{1}{2}(3+\sqrt{2+2\cos\theta}-\cos\theta)=\frac{9}{4} -\frac{1}{4}(\sqrt{2+2\cos\theta}-1)^2\]
Finally, substitute in the half-angle formula $\sqrt{2+2\cos\theta} =2\cos\frac{\theta}{2}$ to obtain the equality stated in the lemma.
%\[\max_{\vec{x_2}}F^G_k(\vvec{x})=\frac{9}{4}-\frac{1}{4}(2\cos\tfrac{\theta}{2}-1)^2.\]

To prove the inequality, it remains to show that $(2\cos\frac{\theta}{2}-1)^2 \ge (1-\frac{3\theta}{2\pi})^2$ for all $\theta \in [0,\pi)$.

Note that $2\cos\frac{\theta}{2}-1$ and  $1-\frac{3\theta}{2\pi}$ are equal at $\theta=0$ and $\theta=2\pi/3$. Since $\frac{d^2}{d\theta^2} (2\cos\frac{\theta}{2}-1)=-\frac{1}{2}\cos\frac{\theta}{2}\le 0$ for $\theta \in [0,\pi)$ and $1-\frac{3\theta}{2\pi}$ is linear, it must be the case that 
\[2\cos\tfrac{\theta}{2}-1 \ge 1-\tfrac{3\theta}{2\pi} \ge 0 \quad \text{for } \theta \in [0,2\pi/3]\]
\[2\cos\tfrac{\theta}{2}-1 \le 1-\tfrac{3\theta}{2\pi} \le 0 \quad \text{for } \theta \in [2\pi/3,\pi]\]
and therefore $(2\cos\frac{\theta}{2}-1)^2 \ge (1-\frac{3\theta}{2\pi})^2$ for all $\theta \in [0,\pi)$ as required.
\end{proof}

\subsection{Bipartite graphs}

\begin{lemma}
    
    \label{lem:bipartitegraph}
    Let $G$ be a $d$-regular connected, bipartite graph with edges across the bipartition $A\cup B=V$.  Let $\lambda_{\max}$ be the largest eigenvalue of the Laplacian $L$ of $G$, and suppose the gap between the largest and second largest eigenvalue is at least $\Delta$.

    Let $\vvec{x}=(\vec{x_1},\vec{x_2},\dots ,\vec{x_n})$ be an assignment of $k$-dimensional vectors to the $n$ vertices of $G$.
    %Then there exists an efficiently computable vector $\vec{z}$ such that $\vec{x_i}=\vec{z}+\vec{e_i}$ and
    Let $\vec{z}=(\sum_{i\in A} \vec{x_i}-\sum_{i \in B} \vec{x_i})/n$, and for each $i\in A$ let $\epsilon_i$ be the angle between $\vec{z}$ and $\vec{x_i}$.
    Then
    %Let $\vec{z}=\sum_{i=1}^n e^*_i\vec{x_i}/n$
    %\[F^G_k(\vvec{x})\le \lambda_{\max} (n-\gamma \sum_{i=1}^n\norm{\vec{e_i}}^2 )\]
    \[F^G_k(\vvec{x})\le \frac{\lambda_{\max} n}{4}-\frac{\Delta}{4\pi^2} \sum_{i \in A}\epsilon_i^2 
    \]
\end{lemma}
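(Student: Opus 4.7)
My plan is to convert the objective $F_k^G(\vvec{x})$ into a Laplacian quadratic form, apply the spectral decomposition of $L$ (isolating the eigenvector that captures the bipartition), and then translate the resulting spectral bound into a bound involving the angles $\epsilon_i$.

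For Step 1, for each coordinate $l\in[k]$ let $v_l\in\R^n$ be the vector whose $i$-th entry is $(\vec{x_i})_l$. Since $G$ is $d$-regular with $L=dI-A$ (so $|E|=dn/2$), the standard identity $\sum_{ij\in E}(v_l)_i(v_l)_j = \tfrac{1}{2} v_l^{\T}(dI-L)v_l$ and the fact that each $\vec{x_i}$ is a unit vector (which gives $\sum_l \|v_l\|^2 = n$) yield
\[F_k^G(\vvec{x}) = \frac{|E|}{2} - \frac{1}{2}\sum_{ij\in E}\vec{x_i}\cdot\vec{x_j} = \frac{1}{4}\sum_{l=1}^k v_l^{\T} L v_l.\]

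For Step 2, since $G$ is a connected $d$-regular bipartite graph, $L$ has top eigenvalue $\lambda_{\max}=2d$ with unit eigenvector $u = (\chi_A-\chi_B)/\sqrt{n}$. The spectral gap assumption gives, for each $l$,
\[v_l^{\T} L v_l \le \lambda_{\max}\langle v_l,u\rangle^2 + (\lambda_{\max}-\Delta)(\|v_l\|^2-\langle v_l,u\rangle^2) = \lambda_{\max}\|v_l\|^2 - \Delta(\|v_l\|^2-\langle v_l,u\rangle^2).\]
A direct computation shows $\langle v_l,u\rangle = \sqrt{n}\,(\vec{z})_l$, so $\sum_l \langle v_l,u\rangle^2 = n\|\vec{z}\|^2$. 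Summing over $l$ yields
\[F_k^G(\vvec{x}) \le \frac{\lambda_{\max}n}{4} - \frac{\Delta n}{4}\bigl(1-\|\vec{z}\|^2\bigr).\]

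For Step 3, I need $n(1-\|\vec{z}\|^2) \ge \tfrac{1}{\pi^2}\sum_{i\in A}\epsilon_i^2$. The triangle inequality gives $\|\vec{z}\|\le 1$, so $n(1-\|\vec{z}\|^2) \ge n(1-\|\vec{z}\|)$. Assuming $\vec{z}\ne 0$ (the degenerate case can be handled by continuity) and setting $\hat{z}=\vec{z}/\|\vec{z}\|$, projecting $n\vec{z}$ onto $\hat{z}$ and bounding $-\hat{z}\cdot\vec{x_i}\le 1$ for $i\in B$ gives
\[n\|\vec{z}\| = \sum_{i\in A}\cos\epsilon_i - \sum_{i\in B}\hat{z}\cdot\vec{x_i} \le \sum_{i\in A}\cos\epsilon_i + |B|.\]
Since $|A|+|B|=n$, rearranging produces $n-n\|\vec{z}\| \ge \sum_{i\in A}(1-\cos\epsilon_i)$, and Jordan's inequality $1-\cos\theta \ge \theta^2/\pi^2$ for $\theta\in[0,\pi]$ finishes the bound.

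The main obstacle I anticipate is deciding cleanly how to handle the $B$-side of the projection: the lemma only quantifies angles on $A$, so we must settle for the trivial bound $|\hat{z}\cdot\vec{x_i}|\le 1$ on $B$, which is what forces the lemma to be stated asymmetrically. The edge case $\vec{z}=0$ (where the angles $\epsilon_i$ are undefined) is a minor technicality rather than a real difficulty, since the spectral bound of Step 2 is vacuously strong in that regime.
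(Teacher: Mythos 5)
Your proof is correct, and its core — rewriting $F^G_k(\vvec{x})=\frac14\sum_l v_l^{\T}Lv_l$ and splitting each coordinate vector along the top eigenvector $(\chi_A-\chi_B)/\sqrt{n}$ to invoke the spectral gap — is exactly the paper's argument; the quantity $n(1-\norm{\vec{z}}^2)$ you arrive at equals the paper's $\sum_i\norm{\vec{e_i}}^2$. The only divergence is the last step: the paper works per vertex, writing $\vec{x_i}=\vec{z}+\vec{e_i}$ for $i\in A$ and showing $\norm{\vec{e_i}}^2\ge\sin^2\epsilon_i$ (or $\ge 1$ when $\epsilon_i>\pi/2$), whereas you bound the aggregate via $n(1-\norm{\vec{z}}^2)\ge n(1-\norm{\vec{z}})\ge\sum_{i\in A}(1-\cos\epsilon_i)$ and then apply $1-\cos\theta\ge\theta^2/\pi^2$. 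Both routes throw away the $B$-side contribution and land on the same $\Delta/(4\pi^2)$ constant, so the difference is cosmetic; your version is arguably slightly slicker since it never needs to introduce the residual vectors $\vec{e_i}$ explicitly, at the cost of discarding the harmless factor $1+\norm{\vec{z}}$.
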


In the construction for Lemma~\ref{lem:main}, Lemma~\ref{lem:bipartitegraph} will be applied to a $d$-regular bipartite expander graph, for $d$ constant and with constant spectral gap. Since the spectrum of the adjacency matrix of a bipartite graph is symmetric about 0, this implies that the largest eigenvalue $\lambda_{\max}$ of the Laplacian $L$ is $2d=O(1)$ and the gap $\Delta$ to the second largest eigenvalue is also constant.

\begin{proof}
The total energy is 
\[F^G_k(\vvec{x})=\sum_{ij \in E}\frac{1}{2}(1-\vec{x_i} \cdot \vec{x_j})=\sum_{i,j \in V}\frac{1}{4}L_{ij}\vec{x_i}\cdot \vec{x_j}.\]
Let the coordinates of $\vec{x_i}$ be $x_{il}$ for $l \in [k]$, so that $\vec{x_i}$ can be identified with the $i$th row of the matrix with entries $\{x_{il}\}_{i\in[n], l \in [k]}$. Define the vector $\vec{\hat{x}_l}$ to be the $l$th column of this matrix, i.e. the $n$-dimensional vector with coordinates $x_{il}$ for $i\in [n]$
\[\sum_{i,j \in V}\frac{1}{4}L_{ij}\vec{x_i}\cdot \vec{x_j}
=\sum_{l=1}^k\sum_{i,j \in V}\frac{1}{4}x_{il}L_{ij}x_{jl}
=\sum_{l=1}^k \frac{1}{4}\vec{\hat{x}_l}^{\T}L\vec{\hat{x}_l}.\]

If $G$ is connected then the all ones vector is the unique smallest eigenvector of $L$, with eigenvalue $0$. Since $G$  bipartite across the split $V=A\cup B$, the spectrum of the adjacency matrix $Adj(G)$ is symmetric about 0. To see this note that the diagonal matrix with entry at $i,i$ on the diagonal equal to $+1$ for $i \in A$, and $-1$ for $i \in B$, maps eigenvectors of $Adj(G)$ of eigenvalue $\lambda$ to eigenvectors of eigenvalue $-\lambda$. 
Since $G$ is $d$-regular, the spectrum of $L=dI-Adj(G)$ is therefore symmetric about $d$ and $L$ has a unique largest eigenvalue $\lambda_{\max}=2d$ with corresponding eigenvector $\vec{e^*}$:
\[e^*_i= \begin{cases}
    1 & \text{ for } i \in A \\
    -1 & \text{ for } i \in B.
\end{cases}
\]
Decompose $\vec{\hat{x}_l}$ in terms of $\vec{e^*}$ and some $\vec{\hat{e}_l}$ orthogonal to $\vec{e^*}$
\begin{equation}
    \label{eq:xhatl}
    \vec{\hat{x}_l}=z_l\vec{e^*}+\vec{\hat{e}_l}.
\end{equation}
The vector $\vec{e^*}$ is the unique eigenvector of $L$ corresponding to the largest eigenvalue $\lambda_{\max}$ and so  $\vec{e^{*}}^\T L \vec{e^*}=\lambda_{\max}\norm{\vec{e^*}}^2$. Since $\vec{\hat{e}_l}$ is orthogonal to $\vec{e^*}$, it has overlap with no eigenvector of eigenvalue larger than the second largest eigenvalue $\lambda_{\max}-\Delta$, so $\vec{\hat{e}_l}^\T L \vec{\hat{e}_l}\le (\lambda_{\max}-\Delta)\norm{\vec{\hat{e}_l}}^2$. 
Therefore
\begin{align*}
    F^G_k(\vvec{x})=\sum_{l=1}^k \frac{1}{4}\vec{\hat{x}_l}^{\T}L\vec{\hat{x}_l}
&=\frac{1}{4}\sum_{l=1}^k \left(z_l^2 \vec{e^{*}}^\T L \vec{e^*} +\vec{\hat{e}^\T_l} L \vec{\hat{e}_l}\right)\\
&\le\frac{1}{4}\sum_{l=1}^k \left(z_l^2 \lambda_{\max}\norm{\vec{e^*}}^2 +(\lambda_{\max}-\Delta)\norm{\vec{\hat{e}_l}}^2\right)\\
&=\frac{1}{4}\left(\lambda_{\max}\Big(\sum_{l=1}^k \norm{\vec{\hat{x}_l}}^2\Big)-\Delta\Big(\sum_{l=1}^k\norm{\vec{\hat{e}_l}}^2\Big) \right)
%&=\frac{1}{4}\left(\lambda_{\max}(\sum_{l=1}^k z_l^2+\norm{\vec{\hat{e}_l}}^2)-\Delta\sum_{l=1}^k\norm{\vec{\hat{e}_l}}^2 \right)
%&=\frac{1}{4}\left(\lambda_{\max}n-\Delta\sum_{l=1}^k\norm{\vec{\hat{e}_l}}^2 \right)
\end{align*}
where in the final equality we have used the fact that $\vec{e^*}$ and $\vec{\hat{e}_l}$ are orthogonal to equate $\norm{\vec{\hat{x}_l}}^2=z_l^2 \norm{\vec{e^*}}^2 +\norm{\vec{\hat{e}_l}}^2$.
Since each $\vec{x_i}$ has unit size, we have:

\[ \sum_{l=1}^k\norm{\vec{\hat{x}_l}}^2=\sum_{i=1}^n\sum_{l=1}^k x_{il}^2
=\sum_{i=1}^n \norm{\vec{x_i}}^2=n .\]

Similarly let the coordinates of $\vec{\hat{e}_l}$ be $e_{il}$ for for $i \in [n]$, so that $\vec{\hat{e}_l}$ can be identified with the $l$th column of the matrix with entries $\{e_{il}\}_{i\in[n], l \in [k]}$. Define the vector $\vec{e_i}$ to be the $i$th row of this matrix, i.e. the $k$-dimensional vector with coordinates $x_{il}$ for $l\in [k]$.

Observe that $\sum_{l=1}^k\norm{\vec{\hat{e}_l}}^2 =\sum_{l=1}^k\sum_{i=1}^n e_{il}^2=\sum_{i=1}^n\norm{\vec{e_i}}^2$, so we have 
\[F^G_k(\vvec{x})\le\frac{1}{4}\left(\lambda_{\max}n-\Delta\sum_{i=1}^n\norm{\vec{e_i}}^2 \right).\]
Noting that $\norm{\vec{e_i}}^2\ge 0$ for all $i \in B$, all that remains is to show that $\norm{\vec{e_i}}^2 \ge \epsilon_i^2/\pi^2$ for all $i \in A$. 
We actually prove the slightly stronger bound of
\[\norm{\vec{e_i}}\ge \begin{cases}
    \sin(\epsilon_i) & \text{ if } \epsilon_i \in [0,\pi/2] \\
    1 & \text{ if } \epsilon_i \in [\pi/2,\pi] .
\end{cases} \]

Since $\vec{e^*}$ and $\vec{\hat{e}_l}$ are orthogonal, we can take the inner product of equation \eqref{eq:xhatl} with $\vec{e^*}$ to get:
\[z_l\norm{\vec{e^*}}^2=\vec{\hat{x}_l}\cdot \vec{e^*}=\sum_{i \in A}x_{il} - \sum_{i \in B}x_{il} .\]
So $\vec{z}=(\sum_{i\in A} \vec{x_i}-\sum_{i \in B} \vec{x_i})/n$ is the $k$-dimensional vector with coordinates $z_l$.
Looking at the $i$th coordinate of $\vec{\hat{x}_l}$ in equation \eqref{eq:xhatl} for $i \in A$, we have:
\[x_{il}=z_{l}+e_{il} \quad \Rightarrow \quad \vec{x_i}=\vec{z}+\vec{e_i} .\]
The angle $\epsilon_i$ between $\vec{z}$ and $\vec{x_i}$ satisfies $\vec{z} \cdot \vec{x_i}=\norm{\vec{z}}\norm{\vec{x_i}}\cos(\epsilon_i)=\norm{\vec{z}}\cos(\epsilon_i)$, and so 
\begin{align}
    \norm{\vec{e_i}}^2&=(\vec{x_i}-\vec{z})\cdot(\vec{x_i}-\vec{z})\\
    &=\vec{x_i}\cdot \vec{x_i} -2 \vec{x_i}\cdot \vec{z}+ \vec{z}\cdot \vec{z} \\
    &=1-2\norm{\vec{z}}\cos(\epsilon_i) +\norm{\vec{z}}^2.
    \label{eq:eisquared}
\end{align}
If $\epsilon_i \in [\pi/2,\pi]$, then $\cos(\epsilon_i)\ge 0$, and so $\norm{\vec{e_i}}^2\ge 1$.
Otherwise, if $\epsilon_i \in [0,\pi/2]$, we can complete the square in equation \eqref{eq:eisquared} to get 
\[\norm{\vec{e_i}}^2=1-\cos^2(\epsilon_i) +\left(\norm{\vec{z}}-\cos(\epsilon_i)\right)^2
\ge \sin^2(\epsilon_i).\]
\end{proof}

\subsection{Proof of Lemma~\ref{lem:main}}
\label{sec:lemproof}
We are now ready to prove Lemma~\ref{lem:main}.
\begin{proof}
    For part 1. of Lemma~\ref{lem:main}, let $\vvec{x}$ be the assignment that achieves $F_k^G(\vvec{x})=\MAXCUT_k(G)$.
Embed each vector $\vec{x_i}$ into $\R^{k+1}$ by padding with an extra $0$ and call the resulting vector $\vec{x'_i}\in \R^{k+1}$. Let $\vec{z}=(0,0,\dots,0,1)\in \R^{k+1}$  be the new direction.

Define $\vvec{y}\in (S_{k+1})^{V'}$ by:
\[\vec{y_i}=\cos(2\pi/3)\vec{z}+\sin(2\pi/3)\vec{x'_i} \]
\[\vec{y_{c_{i,\alpha}}}=\cos(4\pi/3)\vec{z}+\sin(4\pi/3)\vec{x'_i} \]
\[\vec{y_{a_{i,\alpha}}}=\vec{z} \quad \text{ and } \vec{y_{b_{i,\alpha}}}=-\vec{z}.\]
This assignment achieves the maximum possible $9/4$ for each triangle $i,a_{i,\alpha},c_{i,\alpha}$. The expander graph includes $4\eta m$ vertices in total, and so $\vvec{y}$ achieves the maximum of $\lambda_{\max}\eta m$ for the expander graph.
For the edges in $G'$ corresponding to the original edges of $G$:
\[\frac{1}{2}[1-\vec{y_i}\cdot\vec{y_j}]
=\frac{1}{2}[1-\cos^2(2\pi/3)-\sin^2(2\pi/3)\vec{x_i}\cdot \vec{x_j}]=\frac{3}{4}\times \frac{1}{2}[1-\vec{x_i}\cdot \vec{x_j}] .\]
So in total:
\[F_{k+1}^{G'}(\vvec{y})=\frac{9}{4}2\eta m +\lambda_{\max}\eta m+\frac{3}{4}F_{k}^G(\vvec{x}) .\]
Noting that $\MAXCUT_{k+1}(G')\ge F_{k+1}^{G'}(\vvec{y})$ and $F_{k}^G(\vvec{x})=\MAXCUT_k(G)$, we have 
\[\MAXCUT_{k+1}(G') \ge C\eta m +\frac{3}{4} \MAXCUT_k(G)\]
where $C=9/2+\lambda_{\max}$.

\vspace{1cm}
The remainder of the proof focuses on part 2. of Lemma~\ref{lem:main}. Note that we reuse the variable names $\vec{x}$, $\vec{y}$, $\vec{z}$, although they are not the same as in part 1.  
We start by describing the map $g:(S_{k})^{n+6\eta m}\rightarrow (S_{k-1})^n$ which maps inputs for $F_{k+1}^{G'}$ to inputs for $F_k^G$. 

Let $\vvec{y} \in (S_{k})^{n+6\eta m}$ and set  $\vec{z}=\sum_{a \in A} \vec{y_{a}}-\sum_{b\in B} \vec{y_{b}}/\norm{\sum_{a \in A} \vec{y_{a}}-\sum_{b\in B} \vec{y_{b}}}$.
Now change basis so that $\vec{z}=(0,0,\dots,0,1)\in \R^{k+1}$. For each $i\in V$, let $\vec{x_i}$ be the unit vector you get by taking the first $k$ entries of $\vec{y_i}$ in this basis and renormalising.
Let $g(\vvec{y})=\vvec{x}=(\vec{x_1},\vec{x_2},\dots \vec{x_n})$.

We write $\vec{x'}$ for the embedding of $\vec{x}$ into $\R^{k+1}$ by padding with an extra zero; or equivalently $\vec{x_i'}$ is the (renormalised) projection of $\vec{y_i}$ onto the space orthogonal to $\vec{z}$. 
Since $\vec{y_i}$, $\vec{z}$ and $\vec{x_i'}$ are all unit vectors for all $i\in V$, we have:
\[\vec{y_i} =\cos(\theta_i) \vec{z} +\sin(\theta_i)\vec{x'_i}\]
for some angle $\theta_i$ between $\vec{z}$ and $\vec{y_i}$.

We can now bound the contribution from different parts of $G'$. First consider the expander graph across $A$ and $B$.
Let $\epsilon_{i,\alpha}$ be the angle between $\vec{y_{a_{i,\alpha}}}$ and $\vec{z}$. 
By Lemma~\ref{lem:bipartitegraph}, the contribution to $F_{k+1}^{G'}(\vvec{y})$ from the bipartite graph is at most 
\[\lambda_{\max}\eta m-\frac{\Delta}{4\pi^2}\sum_{i,\alpha}\epsilon_{i,\alpha}^2.\]

Second, consider the triangle gadgets. The angle between $\vec{y_i}$ and $\vec{y_{a_{i,\alpha}}}$ is in $[\theta_i-\epsilon_{i,\alpha}, \theta_i+\epsilon_{i,\alpha}]$. Let $\phi_i=\frac{2\pi}{3}-\theta_i$ so that by Lemma~\ref{lem:triangle}, the contribution from the triangle $i, a_{i,\alpha}, c_{i,\alpha}$ is at most
\[\frac{9}{4}-\frac{9}{16\pi^2}(\phi_{i}+\epsilon_{i,\alpha})^2.\]

Third, consider the original edges from $G$ between the vertices in $V$. For an edge $ij \in E$, if $\theta_i=\theta_j=2\pi/3$ (i.e. $\phi_i=0=\phi_j$), then the contribution of that edge to $F_{k+1}^{G'}(\vvec{y})$ is 
\[\frac{3}{8}(1-\vec{x_i}\cdot \vec{x_i}).\]
If $\phi_i$ and $\phi_j$ are non-zero, this can change the angle $\xi_i$ between $\vec{y_i}$ and $\vec{y_j}$ by at most $|\phi_i|+|\phi_j|$.
Since $\vec{y_i}\cdot\vec{y_j}=\cos(\xi_i)$, the rate of change with respect to $\xi_i$ is at most $|\frac{d}{d{\xi_i}} \cos(\xi_i) |=|\sin(\xi_i)|\le 1$.
Therefore the edge $ij$ can contribute at most 
\[\frac{3}{8}(1-\vec{x_i}\cdot \vec{x_i})+|\phi_i|+|\phi_j|.\]
Adding all these contributions together gives:

\[F_{k+1}^{G'}(\vvec{y}) \le \lambda_{\max} \eta m -\frac{\Delta}{4\pi^2}\sum_{i,\alpha}\epsilon_{i,\alpha}^2
+\sum_{i,\alpha}\left(\frac{9}{4}-\frac{9}{16\pi^2}(\phi_{i}+\epsilon_{i,\alpha})^2\right)\]
\[\qquad \qquad \qquad +\sum_{ij \in E}\left(\frac{3}{8}(1-\vec{x_i}\cdot \vec{x_i})+|\phi_i|+|\phi_j|\right)\]

\[\qquad \qquad =C\eta m +\frac{3}{4}F_k^G(\vvec{x})-\sum_{i,\alpha}\left(c_1\epsilon_{i,\alpha}^2+c_2(\phi_{i}+\epsilon_{i,\alpha})^2\right) +\sum_{i\in V} d_i |\phi_i|\]
where $C=9/2+\lambda_{\max}$, $c_1=\frac{\Delta}{4\pi^2}$ and $c_2=\frac{9}{16\pi^2}$.
If the bipartite expander graph is chosen to be low degree and with good spectral expansion such that $\lambda_{\max}$ and $\Delta$ are constant, then $C,c_1,c_2$ are all constant.

Looking at the terms with $\epsilon_{i,\alpha_i}$ and rearranging we have
\[c_1\epsilon_{i,\alpha}^2+c_2(\phi_{i}+\epsilon_{i,\alpha})^2=(c_1+c_2)(\epsilon_{i,\alpha}+\frac{c_2}{c_1+c_2}\phi_i)^2 +\frac{c_1c_2}{c_1+c_2}\phi_i^2 \ge c'\phi_i^2\]
where $c'=\frac{c_1c_2}{c_1+c_2}$. Substituting this back in and summing over $\alpha$, we get:
\[F_{k+1}^{G'}(\vvec{y}) \le C\eta m +\frac{3}{4}F_k^G(\vvec{x})
+\sum_{i\in V} d_i(-\eta c'\phi_i^2 +|\phi_i|).\]
To upper bound $-\eta c'\phi_i^2 +|\phi_i|$, note that it is symmetric about zero, and for $\phi_i\ge 0$, one can complete the square $-\eta c'\phi_i^2 +\phi_i=-\eta c'(\phi_i-\frac{1}{2\eta c'})^2+\frac{1}{4\eta c'} \le \frac{1}{4\eta c'}$.
Putting this back in leads to 
\[F_{k+1}^{G'}(\vvec{y}) \le C\eta m +\frac{3}{4}F_k^G(\vvec{x})+\frac{1}{2c'\eta}m.\]
\end{proof}

\bibliography{maxbib}

\end{document}